\documentclass[10pt, nofootinbib,longbibliography, aps, prb, reprint, notitlepage,superscriptaddress]{revtex4-2}

\usepackage[latin1]{inputenc}
\usepackage{dsfont}
\usepackage{graphicx}
\usepackage{amsmath}
\usepackage{amsthm}

\usepackage{amsfonts}
\usepackage{amssymb}
\usepackage{mathtools}
\usepackage[colorlinks=true,citecolor=blue]{hyperref}
\usepackage[shortlabels]{enumitem}
\usepackage[capitalise,compress]{cleveref}
\usepackage{textcomp, complexity}
\usepackage{silence}

\WarningFilter{revtex4-1}{Repair the float}

\newtheorem{theorem}{Theorem}
\newtheorem{problem}{Problem}
\newtheorem{lemma}{Lemma}
\newtheorem{definition}{Definition}
\newtheorem{corollary}{Corollary}
\newcommand{\avg}[1]{\left \langle #1 \right\rangle}
\newcommand{\ket}[1]{\left | #1 \right\rangle}
\newcommand{\bra}[1]{\left \langle #1 \right |}

\newcommand{\braket}[2]{\left\langle #1|#2\right\rangle}
\newcommand{\abs}[1]{\left | #1 \right|}
\renewcommand{\epsilon}{\varepsilon}
\renewcommand{\O}[1]{\mathcal O\left(#1\right)}
\newcommand{\id}{\mathds{1}}
\newcommand{\norm}[1]{\left\|#1\right\|}

\makeatletter

\makeatother

\usepackage{braket}

\usepackage{array}

\usepackage[dvipsnames]{xcolor}

\usepackage{mathtools}
\DeclarePairedDelimiter\ceil{\lceil}{\rceil}
\DeclarePairedDelimiter\floor{\lfloor}{\rfloor}

\usepackage{ulem}
\normalem
\usepackage{times}

\begin{document}
\title{Simulation Complexity of Many-Body Localized Systems}
\author{Adam Ehrenberg}
\affiliation{Joint Quantum Institute, NIST/University of Maryland, College Park, MD 20742, USA}
\affiliation{Joint Center for Quantum Information and Computer Science, NIST/University of Maryland, College Park, MD 20742, USA}
\author{Abhinav Deshpande}
\affiliation{Institute for Quantum Information and Matter, California Institute of Technology, Pasadena, CA 91125, USA}
\author{Christopher L.\ Baldwin}
\affiliation{Joint Quantum Institute, NIST/University of Maryland, College Park, MD 20742, USA}
\affiliation{National Institute of Standards and Technology, Gaithersburg, MD 20899, USA}
\author{Dmitry A.\ Abanin}
\affiliation{Department of Theoretical Physics, University of Geneva, 1211 Geneva, Switzerland}
\author{Alexey V.\ Gorshkov}
\affiliation{Joint Quantum Institute, NIST/University of Maryland, College Park, MD 20742, USA}
\affiliation{Joint Center for Quantum Information and Computer Science, NIST/University of Maryland, College Park, MD 20742, USA}
\date{\today}
\begin{abstract}
We use complexity theory to rigorously investigate the difficulty of classically simulating evolution under many-body localized (MBL) Hamiltonians.
Using the defining feature that MBL systems have a complete set of quasilocal integrals of motion (LIOMs), we demonstrate a transition in the classical complexity of simulating such systems as a function of evolution time.
On one side, we construct a quasipolynomial-time tensor-network-inspired algorithm for strong simulation of 1D MBL systems (i.e., calculating the expectation value of arbitrary products of local observables) evolved for any time polynomial in the system size. 
On the other side, we prove that even weak simulation, i.e. sampling, becomes formally hard after an exponentially long evolution time, assuming widely believed conjectures in complexity theory. 
Finally, using the consequences of our classical simulation results, we also show that the quantum circuit complexity for MBL systems is sublinear in evolution time.
This result is a counterpart to a recent proof that the complexity of random quantum circuits grows linearly in time.
\end{abstract}

\vspace*{0in}
\maketitle
\section{Introduction}\label{sec:Introduction}
As quantum computers become larger-depth, less error-prone, and eventually fully fault-tolerant, it will become increasingly important to understand which computational problems admit quantum speedups over the best possible classical algorithms.
This question broadly falls under the domain of computational complexity theory, which studies how easy or hard it is to solve certain problems under various computational assumptions. 
More specifically, \emph{sampling complexity}, the study of how difficult it is to draw samples from classes of probability distributions, is a useful framework for studying the classical hardness of simulating quantum systems, and can help to narrow the parameter space where quantum advantage may be obtained.
At their core, many quantum experiments reduce to repeatedly preparing a certain quantum state, measuring it (thus generating a probability distribution of outcomes), and classically post-processing on the measurement results.
This high-level viewpoint motivates the systematic study of quantum systems via the lens of sampling complexity.
Indeed, the past ten years have seen significant interest in sampling after the proof (up to widely believed mathematical conjectures) that one could obtain a quantum advantage in the famous Boson Sampling problem~\cite{Aaronson2011}, leading to the recent demonstration of quantum sampling experiments believed to be beyond the accessibility of classical simulations ~\cite{arute_quantum_2019,zhong_quantum_2020, zhong_phase-programmable_2021}. 

With the same motivation in mind, Ref.~\cite{Deshpande2018Dynamical} considered a system of indistinguishable non-interacting bosons distributed on a lattice and evolved under a local Hamiltonian (also see Refs.~\cite{Muraleedharan2018,Maskara2019a} for variants of this problem).
Intuitively, one expects that classical simulation is initially easy while the particles are separated, but grows more difficult as the system evolves.
Reference~\cite{Deshpande2018Dynamical} formalized this idea by showing that sampling remains easy until the particles have evolved for long enough to travel the distance initially separating them, whereafter their fundamental indistinguishability leads to quantum interference that is hard to classically simulate. 
A key corollary of this result is that classical sampling is easy in single-particle-localized systems, where the particle wavepackets do not spread out~\cite{Thouless1974Electrons,Kramer1993Localization,Billy2008Direct,Roati2008Anderson}. 
Thus, while single-particle localized systems are fascinating from a condensed matter perspective, we do not necessarily expect them to encode hard computational problems, and we will likely have to look to other types of systems to find useful quantum speedups.

The present work is concerned with the more subtle situation of \emph{many-body} localization (MBL)~\cite{Nandkishore2015Many,Abanin2017Recent,Abanin2019} in spin systems, which we take to mean any spin Hamiltonian having a complete set of local integrals of motion (precisely defined below)~\cite{Serbyn2013Local,huse_2014_phenomenology,Chandran2015Constructing,Ros2015Integrals,imbrie_many-body_2016}. These systems differ from the single-particle-localized situation described above in a crucial way: the quasilocal commuting operators that fully describe the dynamics of these systems interact with one another through nontrivial exponentially decaying interactions.
These interactions can spread entanglement through the system and destroy separability of an initial state over exponentially long time-scales. 

Suppose we time-evolve an initial product state under an MBL Hamiltonian acting on $N$ spins and then measure the result in a product basis, generating a probability distribution.
We will explore the algorithmic time complexity of both \emph{strong simulation} and \emph{weak simulation} of this physical system. 
Weak simulation is the ability to sample from the distribution of outcomes, whereas strong simulation is the ability to calculate all marginal and conditional probabilities of the outcomes. 
The ability to strongly simulate a system implies the ability to sample from it \cite{Terhal2002a}, but not vice versa---one can, in principle, sample from a distribution without ever knowing the values of the probabilities.

Observe that in describing the problem of interest, we have introduced two types of time: evolution and computational. For clarity in the remainder of this work, we will use a lower-case $t$ to refer to the physical evolution time, or the time for which the MBL Hamiltonian acts on the initial state. We denote the time complexity of a classical algorithm for a given simulation task with an upper-case $T$. 

We now present our main results. Using techniques inspired by tensor networks, we present an algorithm that can strongly simulate (and thus sample from) any one-dimensional MBL system in quasipolynomial computer time (i.e., times of the form $T = \exp{[\O{\log^{c}N}]}$ \cite{asymptotic_notation} for some $c > 1$), for any evolution time $t$ polynomial in the system size $N$.
It is interesting that even this algorithm does not run in strictly polynomial time, and we are not aware of any algorithm which (provably) can.
Conversely, by using ideas inspired by the hardness of the Instantaneous Quantum Polynomial (IQP) sampling problem in Ref.~\cite{Bermejo-Vega2018}, we also show that the MBL sampling problem becomes hard in the worst case after evolution time $t = \Omega(\exp{[{N^{\delta}}]})$ for arbitrarily small $\delta > 0$ (by ``worst case,'' we mean that we demonstrate that a specific family of MBL Hamiltonians becomes hard to simulate, but this family does not contain all possible MBL Hamiltonians). 
These results are summarized  in \cref{tab:results_summary}.

\begin{table}
    \centering
    \begin{tabular}{c|c|c}
         Evolution Time $t$ & Complexity & Task \\ \hline
         $\O{\log N}$ & Easy \cite{Osborne2006} & Strong Simulation \\
         $\O{\mathrm{poly} N}$ & Quasi-easy & Strong Simulation \\
         $\O{\mathrm{quasipoly} N}$ & Quasi-easy & Strong Simulation \\
         $\Omega({\exp N})$ & Hard & Weak Simulation
    \end{tabular}
    \caption{Summary of our results for classical simulation. We define ``quasi-easy'' to be those problems admitting a quasipolynomial-time algorithm but which may yet possess a polynomial-time algorithm.}
    \label{tab:results_summary}
\end{table}

Interestingly, as a consequence of our proof techniques, we can also derive results on the \emph{quantum circuit complexity} of implementing time evolution due to an MBL Hamiltonian.
The quantum circuit complexity of a unitary $U$ is the minimum number of gates (from a predefined universal gate set) required to approximate $U$.
In many-body physics, it is of great significance to understand how the quantum circuit complexity of a time-evolution operator $e^{-iHt}$ grows with respect to the time $t$ for various Hamiltonians $H$.
In the context of high-energy physics, gravitational physics, and the AdS/CFT correspondence, it was conjectured \cite{Brown2016,Brown2016a} that the circuit complexity of a conformal field theory is dual to the action of a gravitational theory describing the bulk.
More specifically, it has been conjectured that the circuit complexity of fast-scrambling dynamics grows linearly in time until a timescale exponential in system size.
This conjecture has gathered support due to recent work \cite{Brandao2021,Haferkamp2021a}.
In stark contrast with these fast scramblers, we show in this work that the circuit complexity for sufficiently localized MBL Hamiltonians grows only sublinearly with evolution time.
Therefore, our work suggests that, in addition to classical complexity, studying the quantum complexity of simulating time evolution can also serve as a basis for classifying the ergodicity of quantum dynamics.

Others have investigated the simulation of MBL systems. For a few examples, see Refs.~\cite{weidingerSelfconsistentHartreeFockApproach2018, detomasiEfficientlySolvingDynamics2019,chandranSpectralTensorNetworks2015d, pollmannEfficientVariationalDiagonalization2016, wahlEfficientRepresentationFully2017}, which introduce efficient methods for classically simulating both spin and weakly-interacting fermionc MBL systems. However, while these works demonstrate empirically good numerical alternatives to computationally demanding exact diagonalization schemes, they stop short of formal proofs that these algorithms can maintain accuracy for all MBL systems as the system size grows (though Ref.~\cite{chandranSpectralTensorNetworks2015d} does contain some formal proofs in the case of exactly local integrals of motion, as opposed to the more general quasilocal integrals of motion we consider here). Overall, our work is the first to systematically investigate the simulation of generic MBL systems from a rigorous complexity-theoretic perspective. 

The rest of the paper is organized as follows.
In \cref{sec:Setup}, we formally define the simulation problem. 
We then prove in \cref{sec:Truncation} crucial mathematical results that we use in \cref{sec:Easiness} to demonstrate the quasipolynomial runtime of our tensor-network algorithm for strong simulation.
Correspondingly, in \cref{sec:Hardness} we demonstrate that generic MBL Hamiltonians are hard to sample from after exponentially long evolution time $t$.
In \cref{sec:Quantum} we also show that that the quantum circuit complexity of the time-evolution operator of a sufficiently localized MBL Hamiltonians is sublinear in time.
Finally, in \cref{sec:Conclusion} we synthesize these results and consider directions for future work. 


\section{Setup}\label{sec:Setup}
Consider a 1D lattice of $N$ spin-1/2 particles (with spin operators $\sigma_{i}^{\alpha}$, $\alpha = x, y, z$) that evolve under some Hamiltonian $H$.
We say that $H$ is MBL if there exists a quasilocal unitary $U$ (defined below) that brings $H$ to the form
\begin{equation}
	H = \sum_{i}J_{i}\tau_{i}^{z} + \sum_{i < j}J_{ij}\tau_{i}^{z}\tau_{j}^{z} + \sum_{i < j < k}J_{ijk}\tau_{i}^{z}\tau_{j}^{z}\tau_{k}^{z} + \dots, \label{eqn:HMBL}
\end{equation}
with $[\tau_{i}^{z}, \tau_{j}^{z}] = 0$ and $\abs{J_{i_{1}\dots i_{p}}} \leq\exp\left(- {(i_{p} - i_{1})}/{\xi}\right)$. We call the $\sigma_{i}^{z}$ the \emph{physical bits} (p-bits) because they represent the experimentally accessible basis of observables, and we call the $\tau_{i}^{z}$ the \emph{local integrals of motion} (LIOMs) or \emph{localized bits} (l-bits) because they commute with the Hamiltonian and thus represent a set of $N$ conserved quantities that constrain the dynamics.

We define a quasilocal unitary, which we schematically depict in \cref{fig:nlql}, as follows:
\begin{definition}[Quasilocal unitary \cite{Abanin2019}]\label{def:QLU}
     A unitary $U$ is quasilocal if it can be decomposed on a finite 1D lattice with $N$ sites as
	\begin{equation}
 	U = \prod_{n=1}^{N}\prod_{j=1}^{n}\prod_{i=0}^{\lfloor (N-n)/n \rfloor}U_{in + j}^{(n)},
 	\end{equation}
 	where $U_{k}^{(n)}$ acts on sites $k, k+1, \dots, k+n-1$ such that
	\begin{equation}\label{eqn:frob_closeness}
	\norm{\id - U_{k}^{(n)}}^{2} < q e^{-\frac{(n-1)}{\xi}},
	\end{equation}
	where $\norm{\cdot}$ is the operator norm (\textit{i.e.}, the largest singular value of the operand) and $q$ is some $\O{1}$ constant \cite{families}. When $k + n - 1 > N$, $U_{k}^{(n)}$ should be interpreted as a tensor product of two unitaries, one acting on sites $k$ through $N$, and the other on $1$ to $k + n - 1 - N$. 
\end{definition}
This means that we can decompose $U$ into a sequence of $n$ layers of $n$-site unitaries, where the more sites a constituent unitary acts on, the closer it is to the identity.
We call $U$ ``quasilocal'' because, though any two distant sites may be entangled, the amount of entanglement generated decays rapidly with distance.
\begin{figure}
    \centering
    \includegraphics[width=\linewidth]{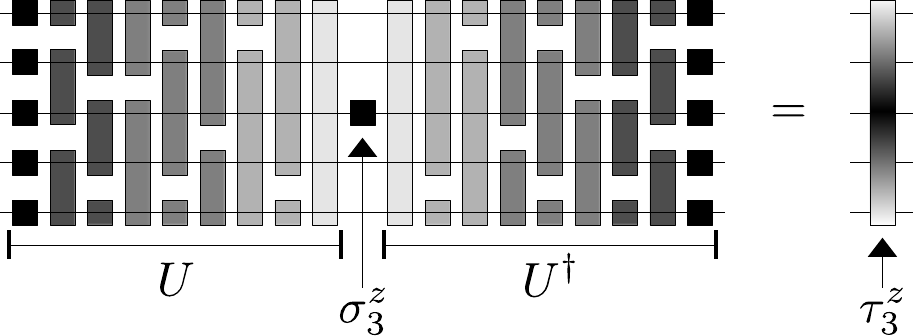}
    \caption{Schematic depiction of a quasilocal unitary $U$ on $N = 5$ sites converting between the physical and localized bases, $U\sigma_{3}^{z}U^{\dag} = \tau_{3}^{z}$. As described in \cref{def:QLU}, $U$ decomposes into constituents, and the opacity of each constituent block represents its proximity to the identity with respect to the norm $\norm{\cdot}$; the lighter the block, the closer it is to the identity.}
    \label{fig:nlql}
\end{figure}

Having defined the properties of our Hamiltonian $H$, consider now an experiment whereby the system is initially prepared in the physical state $\ket{0\dots0}$ (i.e., $\forall i$ $\sigma_{i}^{z}\ket{0\dots0} = \ket{0\dots0}$), then time-evolved into $e^{-iHt}\ket{0\dots0}$, and finally measured in the physical basis. 
The probability of observing an outcome $\ket{\sigma}$ after a time $t$ is $\mathcal{D}(\sigma) \equiv \abs{\braket{\sigma|e^{-iHt}|0\dots0}}^{2}$.
As previously discussed, we want to assess the difficulty of both drawing a sample from (weak simulation) and calculating marginals of (strong simulation) the distribution $\mathcal{D} \equiv \{\mathcal{D}(\sigma)\}_{\sigma}$. 
However, even a quantum computer directly performing such an experiment will be subject to at least small errors, and will thus be unable to draw a sample from this distribution perfectly.
Therefore, we will only assess the difficulty of \emph{approximate} sampling from a distribution $\mathcal{D}_{\epsilon}$ that is $\epsilon$-close to $\mathcal{D}$ in total variation distance (TVD):
\begin{equation}
    \norm{\mathcal{D}_{\epsilon}-\mathcal{D}}_\mathrm{TVD} = \frac{1}{2}\sum_{\sigma}\abs{\mathcal{D}_{\epsilon}(\sigma)-\mathcal{D}(\sigma)} < \epsilon. 
\end{equation}
We state our sampling problem formally.
\begin{problem} \label{prob:mblsampling}
Let $H$ be an MBL Hamiltonian (according to the above definition) on an $N$-site chain and $U$ its corresponding quasilocal unitary.
Consider the distribution $\mathcal{D} ~=~ \{\abs{\braket{\sigma|e^{-iHt}|0\dots0}}^{2}\}_{\sigma}$.
Given a description of $H$ in terms of physical operators, an efficient algorithm to compute any element of any constituent $U_{k}^{(n)}$ of $U$, and an efficient algorithm to compute any coupling $J_{i_{1}\dots i_{p}}$,
output a sample from a distribution $\mathcal{D}_{\epsilon}$ that is $\epsilon$-close to $\mathcal{D}$ in total variation distance for any $\epsilon > 0$. 
\end{problem}
A few comments on \cref{prob:mblsampling} are worthwhile. 
We need these efficient algorithms to calculate any desired constituent $U_{k}^{(n)}$ and any desired coupling $J_{i_{1}\dots i_{p}}$ because knowledge of these quantities will be crucial for our algorithm, and it is too computationally expensive to calculate and naively list out all exponentially many of them. Formally, we assume that we have an \emph{oracle} for these properties of the system. 
  
Ideally we would be able to extract $J_{i_{1}\dots i_{p}}$ and $U$ efficiently from the description of $H$ in the physical basis. However, MBL is typically considered in the context of disordered spin chains where it may not always be possible to efficiently compute these quantities (though there is some evidence that this may be possible -- see Refs. \cite{chandranSpectralTensorNetworks2015d, pollmannEfficientVariationalDiagonalization2016, wahlEfficientRepresentationFully2017, kulshreshthaApproximatingObservablesEigenstates2019,Chertkov2021}). Therefore, we do not restrict ourselves to this particular mechanism for producing LIOMs, and our results will apply to any Hamiltonian that can be diagonalized by quasilocal unitary $U$ into the form \cref{eqn:HMBL}.
Finally, neither the specific initial state nor the measurement basis are critical to our formulation of \cref{prob:mblsampling} as long as they are a product state and a product basis.
This is because we allow $U$ to contain a layer of $\O{1}$ 1-site terms so that we do not pick out any particular basis as special. 
Our main results concern the classical time complexity $T$ of solving \cref{prob:mblsampling} as a function of evolution time $t$ and system size $N$.


\section{Truncating the Canonical Hamiltonian}\label{sec:Truncation}
We proceed to characterize the classical complexity of solving \cref{prob:mblsampling} in two ways depending on the evolution time $t$. 
If $t = \O{\log N}$ and $H$ is finite-range in the physical basis, Ref.~\cite{Osborne2006} proves there exists an efficient matrix-product operator representation of the propagator $e^{-i H t}$. 
This representation may be used to approximately sample from the outcome distribution of evolution under $H$. 
See the Supplemental Material \cite{SM} for more details.

For longer times $t = \omega(\log N)$, we construct a Hamiltonian $\tilde{H}$ for which the time-evolved probability distribution is $\mathcal{\tilde{D}}\equiv \{|\braket{\sigma|e^{-i\tilde{H}t}|0\dots0}|^{2}\}$, such that (a) $\|\mathcal{D}-\mathcal{\tilde{D}}\|_\mathrm{TVD} \leq \epsilon$ and (b) the distribution associated with evolution under $\tilde{H}$ can be sampled from in computer time scaling quasipolynomially with the number of spins $N$.
The total variation distance between the probability distributions associated with two pure states $\ket{\psi}$ and $\ket{\phi}$ can be upper bounded by the 2-norm distance~\cite{Arkhipov2015}, which in turn can be bounded~\cite{Maskara2019a} as $\norm{\ket{\psi(t)}-\ket{\phi(t)}}_{2} \leq ||H-\tilde{H}||t \equiv\norm{\Delta H}t$, where $\norm{\cdot}$ is the standard operator norm. Therefore, if we want the two distributions to be $\epsilon$-close in total variation distance up to a time $t$, it is sufficient to ensure $\norm{\Delta H} \leq {\epsilon}/{t}$. 

We construct this approximate Hamiltonian $\tilde{H}$ by \emph{truncating} the exact Hamiltonian in two ways: via the coupling constants and the LIOMs.
In particular, we set the coupling constants equal to zero if they connect sites beyond a certain radius $r_{J}$, and we set equal to the identity those constituents of $U$ supported on more than $r_{U}$ sites.
Mathematically:
\begin{equation}\label{eq:trunc_J}
	\tilde{J}_{i_{1}\dots i_{p}} = 
	\begin{cases}
		J_{i_{1}\dots i_{p}} & \text{if } i_{p}-i_{1} < r_{J} \\
		0 & \text{if } i_{p}-i_{1} \geq r_{J}
	\end{cases},
\end{equation}
\begin{align}
	\tilde{U} &= \prod_{n=1}^{r_{U}}\prod_{j=1}^{n}\prod_{i=0}^{\lfloor (N-n)/n \rfloor}U_{in + j}^{(n)}, \label{eq:trunc_U}\\
	\tilde{\tau_{i}}^{z} &= \tilde{U}\sigma_{i}^{z}\tilde{U}^{\dag}. 
\end{align}
We can now bound the norm of 
\begin{equation}
    \Delta H \equiv H-\tilde{H} = \sum_{I} J_{I}\tau_{I}^{z} - \tilde{J}_{I}\tilde{\tau}_{I}^{z}
\end{equation}
by applying the triangle inequality:
\begin{align}
	\norm{\Delta H} &\leq 
	\sum_{I} \Big( |J_{I}-\tilde{J}_{I}| + |\tilde{J}_{I}|\norm{\tau_{I}^{z} - \tilde{\tau}_{I}^{z}} \Big) \label{eqn:hamiltonian_difference_norm_split},
\end{align}
where we have introduced $I$ as a general multi-index for brevity.
Before continuing, it is useful to define $S_{p, n_{0}} \equiv \sum_{n = n_{0}}^{\infty}\binom{n}{p}e^{-\frac{n}{\xi}}$.
Intuitively, this sum appears because we will often be interested in summing over couplings of a range exceeding some $n_{0}$, and each coupling comes with an associated exponential decay. 
Assuming that the localization length $\xi < 1/\log 2$, we have:
	\begin{equation}\label{eq:spn0}
		S_{p,n_{0}} \leq C
		\begin{cases}
			e^{-\frac{n_{0}}{\xi}} & p = 0 \\
			pe^{-ap} & n_{0} < n_{*}, p > 0 \\
			\frac{n_{0}^{p+1}\sqrt{p}}{p!}e^{-\frac{n_{0}}{\xi}} & n_{0} \geq n_{*}, p>0
		\end{cases},
	\end{equation}
where $a \equiv \log (e^{1/\xi}-1)$, $n_{*} \equiv p(1-e^{-1/\xi})^{-1}$, and $C$ is some $\O{1}$ constant.
See \cref{lem:sum_bound_full} in the Supplemental Material~\cite{SM} for a detailed proof.

We now separately bound the two contributions to \cref{eqn:hamiltonian_difference_norm_split}.
The details, which are in the Supplemental Material~\cite{SM}, make heavy use of \cref{eq:spn0}, and the result is 
\begin{equation}
\norm{\Delta H} \leq C_{J}Nr_{J}e^{-kr_{J}} + C_{U}N^{2}e^{-\frac{r_{U}}{2\xi}}, \label{eqn:error}
\end{equation}
where $C_{U}$, $C_{J}$, and $k$ are constants independent of $N$.
Intuitively, the factors of $N$ come from summing over sites, and the exponential decay factors come from the decay properties of $H$ and $U$. 
To ensure that $\norm{\Delta H} \leq \epsilon/t$ for some polynomially long time $t = \mathcal{O}(N^{b})$, it suffices to choose 
\begin{equation}\label{eq:rUrJ}
	r_{U} = \Omega (\xi b \log N),
	r_{J} = \Omega (bk^{-1}\log N).
\end{equation}
Therefore, truncating the coupling coefficients and the diagonalizing quasilocal unitary to a scale logarithmic in the system size is sufficient to produce a distribution that is close in total variation distance to the true distribution.


\section{Quasipolynomial-Time Sampling}\label{sec:Easiness}
Having defined an appropriate approximation $\tilde{H}$ we now describe how to sample from the distribution generated by $\tilde{H}$. 
More precisely, we provide an algorithm for strong simulation, meaning it can calculate all probabilities and marginal probabilities of the distribution generated by measuring the simulated system in any local basis. Equivalently, it can estimate the expectation value of arbitrary products of local observables.  
Strong simulation implies the ability to solve the easier problem of weak simulation, i.e.~sampling, which itself implies the ability to calculate the expectation values of local observables \cite{Terhal2002a}. 
Specifically, our algorithm will calculate
\begin{equation}\label{eqn:observable_sampling}
   \langle\tilde{O}\rangle_{t} = \bra{\psi(0)}e^{i\tilde{H}t}Oe^{-i \tilde{H}t}\ket{\psi(0)},
\end{equation}
where $O$ is a product of single-site observables in the p-bit basis of the form $O = \sigma_{i}^{z}\prod_{j<i}P_{j}$, with $P_{j}$ a projector of qubit $j$ onto the 0 or 1 outcome when measuring in the appropriate local basis, and the tilde indicates that we evolve with the approximate Hamiltonian $\tilde H$. 
Intuitively, $O$ is selected such that \cref{eqn:observable_sampling} calculates the conditional probability $P(z_{i}|z_{i-1}\dots z_{1})$, and drawing a sample given these conditional probabilities is equivalent to flipping $N$ biased coins, where the bias of each coin is conditioned on the previous outcomes. 
For $t = \O{\log N}$, we use the algorithm implied by results in Ref.~\cite{Osborne2006} and  elucidated in the Supplemental Material~\cite{SM}.
In short, when $H$ is short-range in the p-bit basis, the propagator for the true Hamiltonian $e^{-iHt}$ can be efficiently approximated by a matrix product operator $M$.
Because a product of local observables also admits a matrix product operator form, $\bra{\psi(0)|M^{\dag}OM\ket{\psi(0)}}$ may be calculated in computational time $T = \mathcal{O}(\mathrm{poly} N)$.

For the more complicated problem of $t = \omega(\log N)$, we provide a different algorithm where each unitary in the circuit is interpreted as a tensor, making the quantum circuit for time evolution a tensor network.
Specifically, we now insert copies of the identity to rewrite \cref{eqn:observable_sampling} as
\begin{equation}\label{eqn:observable_sampling_mbl}
   \langle\tilde{O}\rangle_{t} = \bra{\psi(0)}\tilde{U}^\dag e^{i\tilde{H}_{\sigma}t} \tilde{U} O\tilde{U}^\dag e^{-i \tilde{H}_{\sigma}t}\tilde{U} \ket{\psi(0)},
\end{equation}
where $\tilde{H}_{\sigma} \equiv \tilde{U} \tilde{H} \tilde{U}^{\dag}$ (in words, $\tilde{H}_{\sigma}$ takes the form of~\cref{eqn:HMBL} but with $\sigma_{j}$ in place of $\tilde{\tau}_{j}$ and $\tilde{J}_{I}$ in place of $J_{I}$). 
We calculate these expectation values using a quantum circuit of the form in \cref{fig:H1_contract}.
We order the qubits going from bottom to top and evolution time from left to right.
Following the structure of \cref{eqn:observable_sampling_mbl}, the first section of the circuit applies $\tilde{U}$ to convert to the truncated LIOM basis. 
The second section evolves under the truncated Hamiltonian. 
After converting back to the original basis by using $\tilde{U}^{\dag}$, the operator $O$ is applied.
Then the previous steps are repeated in reverse.
Because the terms of $\tilde{H}_{\sigma}$ pairwise commute, we are allowed to choose the order in which each term appears.
Our choice is the following.
Place all evolution under terms supported on site 1 first, refer to these terms as $\tilde{H}_{1}$, and define $\tilde{V}_{1} \equiv e^{-i\tilde{H}_{1}t}$.
Then, place all evolution under terms supported on site 2, but not site 1, and refer to this as $\tilde{H}_{2}$.
Similarly, define $\tilde{V}_{2} \equiv e^{-i\tilde{H}_{2}t}$.
Continue in this way until all Hamiltonian evolution is accounted for. See \cref{fig:H1_contract} for a depiction of the circuit for $O = \sigma_{4}^{z}P_{3}P_{2}P_{1}$ and $N = 8$.
\begin{figure*}[t!]
    \centering
    \includegraphics[width=\textwidth]{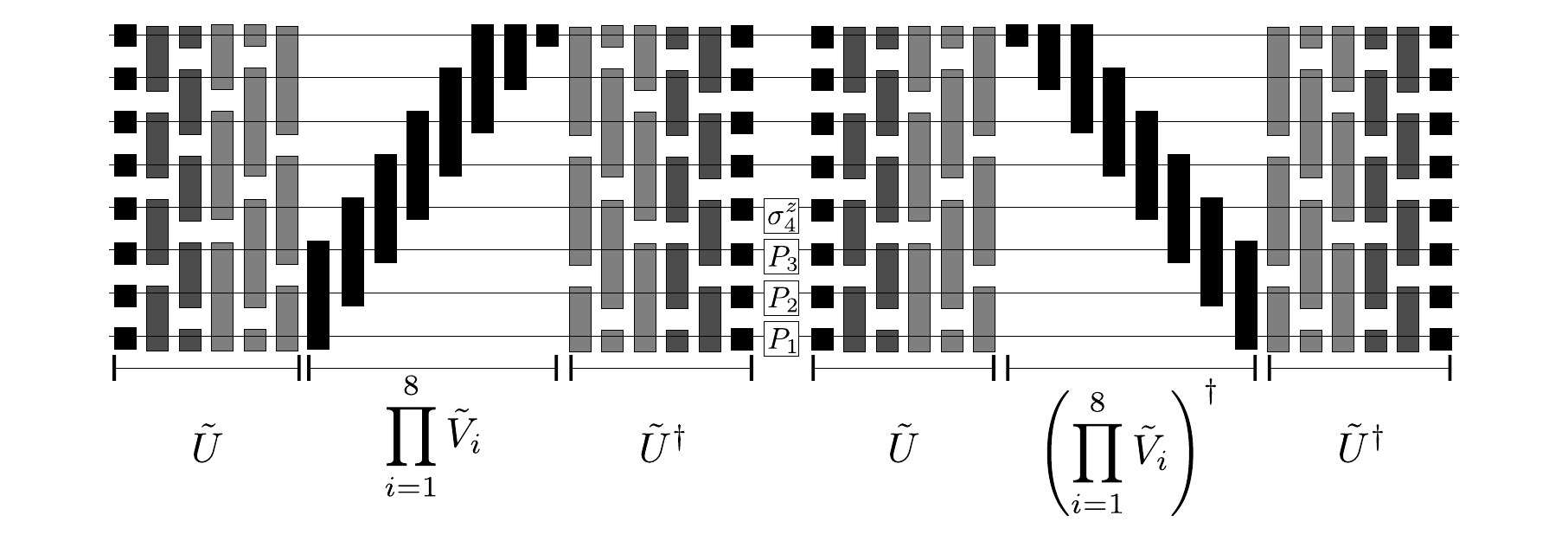}
    \caption{Example of the quantum circuit that calculates a relevant product of local observables $O$ on a lattice of $N = 8$ sites.
    Here $O = \sigma_{4}^{z}P_{3}P_{2}P_{1}$.
    }
    \label{fig:H1_contract}
\end{figure*}
Note that generating $\tilde{V}_{i}$ is an efficient process; there are at most $\binom{r_{J}}{k}$ $k$-site terms that involve site $i$ (but no site before $i$) and have physical range at most $r_{J}$.
Thus, there are at most $2^{r_{J}} \sim \mathrm{poly} N$ Hamiltonian evolution unitaries that must be multiplied together to generate each of the $N$ unitaries $\tilde{V}_{i}$. 
We treat each unitary in the evolution as a tensor, and we contract these tensors ``qubit-wise'' as opposed to ``time-wise.''
That is, instead of contracting tensors in the order that they appear in \cref{eqn:observable_sampling_mbl}, we first contract together every tensor that intersects qubit 1.
We then contract this much larger tensor with every other tensor that intersects qubit 2, and so forth.
Contracting the tensors ``time-wise'' would quickly lead us to an extensively sized tensor spanning some $\Theta(N)$ portion of the system, and evaluating a contraction involving this extensive tensor would take an exponentially long amount of time; contracting the tensors ``qubit-wise'' avoids this issue. 
Ensuring that our algorithm only ever produces tensors with $\mathcal{O}(\log N)$ legs would be sufficient to demonstrate a polynomial time algorithm.
This is because $\tilde{U}$ and $\tilde{U}^{\dag}$ each contain $\mathcal{O}(N\log N)$ constituents, $e^{- i \tilde {H} t}$ contains only $\mathcal{O}(N)$ terms (as we have decomposed it into $\{\tilde{V}_{i}\}$), and there are at most $N$ tensors coming from $O$.
Thus, the total number of tensors, and, correspondingly, the total number of legs that could be contracted, is only $\tilde{\mathcal{O}}(\mathrm{poly }N)$ (where the tilde indicates that we are ignoring logarithmic factors of $N$).
Therefore, the maximum amount of time this algorithm could take would be $\tilde{\mathcal{O}}(\mathrm{poly }N) \cdot 2^{\mathcal{O}(\log N)} = \tilde{\mathcal{O}}(\mathrm{poly }N)$. 

Unfortunately we can only guarantee that our algorithm produces tensors with $\mathcal{O}(\mathrm{polylog} N)$-many legs.
Intuitively, we cannot guarantee against an adversarial placement of constituents in $\tilde{U}, \tilde{U}^{\dag}$ whereby there is a jagged ``skyline'' of tensors leading to $\mathrm{polylog} N$ leftover legs after a qubit is contracted.
Repeating the above analysis means the algorithm can take as long as $\tilde{\mathcal{O}}(\mathrm{poly }N) \cdot 2^{\mathcal{O}(\mathrm{polylog} N)}$. 
This is not a polynomial-time algorithm; it is quasipolynomial, which means it is faster than any exponential-time algorithm, but slower than any polynomial-time algorithm.
Lemma $\ref{lem:contraction_algorithm_quasipolynomial}$ formalizes this rough argument. 

\begin{lemma}\label{lem:contraction_algorithm_quasipolynomial}
	Given the truncation of an MBL Hamiltonian and the quasilocal unitary that diagonalizes it, as in \cref{eq:trunc_J,eq:trunc_U}, following the qubit-wise contraction scheme never creates a tensor with more than $\O{[\log N]^{3}}$ leftover legs.
\end{lemma}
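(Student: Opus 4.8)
The plan is to carefully count, for each qubit in the qubit-wise contraction, how many ``leftover legs'' can survive after that qubit is fully contracted. The central observation is that a leftover leg arises only when some constituent tensor straddles the boundary between the qubits already contracted and those not yet contracted. I would track three independent sources of straddling tensors: (i) the constituents of $\tilde U$ and $\tilde U^\dag$, (ii) the Hamiltonian-evolution blocks $\tilde V_i$, and (iii) the operator $O$. Since $O$ contributes at most $N$ single-site tensors and each $\tilde V_i$ acts on a contiguous block of range at most $r_J = \O{\log N}$, these two sources each contribute only $\O{\log N}$ straddling legs, which is negligible against the target bound; the entire difficulty lives in counting the legs from $\tilde U$.

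**First I would** set up the ``skyline'' picture precisely. From \cref{def:QLU} and the truncation in \cref{eq:trunc_U}, $\tilde U$ is a product over scales $n = 1, \dots, r_U$ of layers of $n$-site constituents, and $r_U = \O{\log N}$ by \cref{eq:rUrJ}. When the cut sits just after qubit $m$, a constituent $U_k^{(n)}$ straddles the cut precisely when $k \le m < k+n-1$, i.e.\ when it starts at or before site $m$ but extends past it. For a fixed scale $n$ there are at most $\O{1}$ such straddling constituents per layer (the constituents at a given scale are spatially disjoint up to the layering structure), so the number of straddling constituents contributed by all scales is $\O{r_U} = \O{\log N}$. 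Naively this would already give an $\O{\log N}$ leg count, but the subtlety the authors flag is that these straddling tensors must be counted \emph{with multiplicity coming from the nesting of contractions}: a tensor that straddled an earlier cut and was absorbed into the running ``blob'' can re-expose legs, and the blob's boundary can be jagged across the $\O{\log N}$ different scales simultaneously.

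**The key step, and the main obstacle,** is to control this nesting. I would argue that at each of the $\O{\log N}$ qubit-contraction steps, the running tensor (the accumulated blob) can itself have a boundary that is ragged across all $\O{\log N}$ scales, and each scale can contribute up to $\O{\log N}$ legs through the interplay of $\tilde U$ acting on both the bra and ket sides of \cref{eqn:observable_sampling_mbl} together with the reflected copies $\tilde U^\dag$. Multiplying the three logarithmic factors---$\O{\log N}$ from the number of scales, $\O{\log N}$ from the leg count per scale accumulated through nesting, and $\O{\log N}$ from the bra/ket/operator structure or from re-exposure across successive contraction steps---yields the claimed $\O{[\log N]^3}$ bound. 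Concretely, I would prove a per-scale lemma bounding the legs contributed by scale $n$ by $\O{[\log N]^2}$, then sum over the $r_U = \O{\log N}$ scales. The delicate part is ruling out an adversarial ``skyline'' in which straddling constituents at different scales conspire so that absorbing one qubit repeatedly re-creates leftover legs faster than they are consumed; I expect this requires a monotonicity or amortized-counting argument showing that once a region is contracted, its interior legs are permanently eliminated and only the $\O{r_U}$-wide boundary region can contribute to the next step.

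**To finish,** I would observe that since no single tensor produced ever exceeds $\O{[\log N]^3}$ legs, and the constituent blocks are $\O{1}$-local unitaries (bounded bond dimension per leg), every individual contraction is bounded, and the leg-count bound is exactly the quantity that controls the runtime estimate $\tilde{\mathcal O}(\mathrm{poly}\,N)\cdot 2^{\O{\mathrm{polylog}\,N}}$ quoted above \cref{lem:contraction_algorithm_quasipolynomial}. The proof is therefore reduced to the combinatorial skyline count, with the cube arising from the product of three independent $\O{\log N}$ contributions.
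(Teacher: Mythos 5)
There is a genuine gap, and it sits exactly where the cube comes from. Your baseline count of straddling tensors is wrong: by \cref{def:QLU} and \cref{eq:trunc_U}, scale $n$ of $\tilde{U}$ consists of $n$ distinct layers of disjoint $n$-site constituents, so a cut just above qubit $k$ is straddled by up to $n$ constituents at scale $n$ (one per layer), not by $\O{1}$ constituents summing to $\O{r_U}$ overall; moreover each straddling constituent can leave up to $n-1$ uncontracted legs above the cut, not $\O{1}$. The paper's proof is nothing more than this static worst-case count: at the moment all tensors intersecting qubit $k$ have been contracted, each scale-$n$ constituent straddling the cut extends at most $n-1$ sites above qubit $k$, and each Hamiltonian block $\tilde{V}_k$ extends at most $r_J-1$ sites, so the running tensor has fewer than $4\left[\sum_{n=2}^{r_U} n(n-1) + 2(r_J-1) + 2\right]$ legs. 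The prefactor $4$ counts the four copies of $\tilde{U}$ and $\tilde{U}^\dag$ appearing in \cref{eqn:observable_sampling_mbl}, and the sum $\sum_{n=2}^{r_U} n(n-1) = \O{r_U^3} = \O{[\log N]^3}$ is where the cube arises: (number of scales) $\times$ (layers per scale) $\times$ (legs per straddling constituent).

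By contrast, the three logarithmic factors you propose --- scales, ``nesting multiplicity,'' and ``bra/ket/operator structure or re-exposure'' --- do not correspond to real mechanisms. The bra/ket structure contributes only the constant factor $4$; and legs, once contracted, are never re-exposed, so there is no dynamic conspiracy across contraction steps to rule out and no monotonicity or amortized argument is needed --- the bound is a per-cut, static count. Your intermediate target of $\O{[\log N]^2}$ legs per scale is numerically the right statement, but its correct justification is simply ``$n$ layers at scale $n$, each with one straddling constituent carrying at most $n-1$ leftover legs''; your sketch never identifies either of these two factors, and the step you flag as the main obstacle would have you hunting for an amortization lemma that the argument does not require.
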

\begin{proof}
We will crudely upper-bound the total number of legs at any stage of the algorithm.
It is simple to see that the largest possible tensor occurs at the end of contracting all tensors intersecting a qubit $k$.
At this point consider a bound on the worst-case scenario where each of the $n$-site constituents in $\tilde{U}$ extends $n-1$ sites above qubit $k$, and $\tilde{V}_{k}$ extends $r_{J}-1$ sites above qubit $k$.
By naively ignoring that the internal legs should be contracted, it is straightforward to verify that this tensor possesses fewer than $4[2(2-1) + 3(3-1) + \cdots + r_{U}(r_{U}-1)) + 2(r_{J} -1) + 2] = \mathcal{O}([\log N]^{3})$ legs.
Because this is the worst-case scenario, the bound is thus proven. 
\end{proof}
\cref{lem:contraction_algorithm_quasipolynomial} bounds the size of any one tensor contracted in the algorithm, thus placing a quasipolynomial-time bound on any individual contraction. 
The total number of contraction operations is itself bounded by a polynomial in $N$.
Finally, we proved earlier that the distributions generated by $H$ and $\tilde{H}$ are $\epsilon$-close for polynomial evolution time.
Thus, the following theorem holds: 

\begin{theorem}\label{thm:easiness}
  For evolution time $t = \O{\mathrm{poly} N}$, the contraction algorithm takes time quasipolynomial in $N$, which means \cref{prob:mblsampling} can be solved in quasipolynomial time.
\end{theorem}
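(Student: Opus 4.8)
The plan is to assemble \cref{thm:easiness} from three ingredients already in place: the accuracy guarantee of the truncation from \cref{sec:Truncation}, the per-tensor leg bound of \cref{lem:contraction_algorithm_quasipolynomial}, and a count of how many tensor contractions the qubit-wise scheme performs. The logical structure is that accuracy lets us replace the exact distribution $\mathcal{D}$ by the truncated one $\tilde{\mathcal{D}}$, while the latter two ingredients let us sample from $\tilde{\mathcal{D}}$ in quasipolynomial time.

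First I would invoke the accuracy result. For $t = \O{N^{b}}$, choosing $r_{U}$ and $r_{J}$ according to \cref{eq:rUrJ} guarantees $\norm{\Delta H} \leq \epsilon/t$, and hence $\norm{\mathcal{D}-\tilde{\mathcal{D}}}_\mathrm{TVD} \leq \epsilon$ by the bound of \cref{sec:Truncation}; crucially, these truncation radii are only logarithmic in $N$. Because the contraction algorithm evaluates marginals of $\tilde{\mathcal{D}}$ exactly (it is a finite, error-free tensor contraction), a sample drawn from $\tilde{\mathcal{D}}$ is automatically $\epsilon$-close to $\mathcal{D}$, so it suffices to analyze the cost of sampling from $\tilde{\mathcal{D}}$.

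Next I would convert the leg bound into a runtime. A tensor supported on $L$ legs stores $2^{L}$ complex amplitudes, so by \cref{lem:contraction_algorithm_quasipolynomial} every intermediate tensor has at most $2^{\O{[\log N]^{3}}} = \exp[\O{\mathrm{polylog} N}]$ entries, and therefore any single contraction costs at most quasipolynomial time. It then remains to count the contractions. As noted above, $\tilde{U}$ and $\tilde{U}^{\dag}$ each contribute $\O{N\log N}$ constituents, the evolution contributes only the $N$ unitaries $\tilde{V}_{i}$ (each assembled from at most $2^{r_{J}} = \mathrm{poly} N$ commuting terms), and $O$ contributes at most $N$ tensors, for a total of $\tilde{\mathcal{O}}(\mathrm{poly} N)$ tensors and hence that many contraction operations. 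Multiplying the polynomial number of operations by the quasipolynomial cost of each yields a total runtime $\tilde{\mathcal{O}}(\mathrm{poly} N) \cdot \exp[\O{\mathrm{polylog} N}] = \exp[\O{\mathrm{polylog} N}]$, which is quasipolynomial.

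Finally I would close the loop from marginals to samples. Taking $O = \sigma_{i}^{z}\prod_{j<i}P_{j}$ in \cref{eqn:observable_sampling} makes a single contraction return the conditional probability $P(z_{i}\mid z_{i-1}\dots z_{1})$; running the contraction for each $i = 1,\dots,N$ (a further polynomial overhead) and flipping the corresponding biased coins produces one exact sample from $\tilde{\mathcal{D}}$. The main obstacle is really the third step above: the whole argument hinges on the leg count staying polylogarithmic rather than linear, since an honest $2^{L}$ accounting turns any failure of \cref{lem:contraction_algorithm_quasipolynomial} into exponential cost. The reason we obtain only quasipolynomial (rather than polynomial) time traces directly to the cubic-in-$\log N$ worst-case ``skyline'' of constituents bounded there; tightening that combinatorial estimate to $\O{\log N}$ would immediately upgrade the theorem to a polynomial-time guarantee.
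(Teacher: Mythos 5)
Your proposal is correct and follows essentially the same route as the paper: truncation radii logarithmic in $N$ give $\epsilon$-closeness in TVD for $t = \O{N^{b}}$, \cref{lem:contraction_algorithm_quasipolynomial} bounds each contraction by $2^{\O{[\log N]^{3}}}$ time, the number of contractions is polynomial (counting constituents of $\tilde{U}$, $\tilde{U}^{\dag}$, the $\tilde{V}_{i}$, and $O$), and conditional probabilities computed via $O = \sigma_{i}^{z}\prod_{j<i}P_{j}$ convert strong simulation into sampling. Your closing observation---that the quasipolynomial (rather than polynomial) runtime traces entirely to the cubic-in-$\log N$ leg bound---also matches the paper's own discussion.
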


Additionally, observe that \cref{thm:easiness} can be extended to quasipolynomial evolution time with little effort.
Tracking the rest of the proof, we see that truncating the quasilocal unitary and the MBL couplings to length scales polylogarithmic in $N$ will make $\norm{\Delta H}$ small enough to counteract the larger evolution time $t$. 
A polylogarithmic truncation distance, however, does not change the quasipolynomial conclusion of \cref{lem:contraction_algorithm_quasipolynomial}. 
Finally, we note \cref{thm:easiness} holds in the worst case, meaning for any possible choice of coupling strengths and quasilocal unitary that obey our definition of MBL. 


\section{Hardness After Exponential Time}\label{sec:Hardness}
In contrast to the quasi-easiness result for strong simulation in \cref{sec:Easiness}, it is also possible to show, via a comparison to Instantaneous Quantum Polynomial (IQP) circuits \cite{Bremner2011}, that weak simulation of, or sampling from, MBL systems becomes formally hard on a classical computer after a time exponential in the system size.
\begin{theorem}\label{thm:hardness}
\cref{prob:mblsampling} is classically hard when the evolution time $t \geq \Omega(e^{N^{\delta}/\xi})$ for any $\delta > 0$.
\end{theorem}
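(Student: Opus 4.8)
The plan is to exhibit an explicit family of MBL Hamiltonians whose induced distribution $\mathcal{D}$ coincides \emph{exactly} with the output distribution of a family of Instantaneous Quantum Polynomial (IQP) circuits already known to be hard to sample from approximately. The structural observation driving the reduction is that, written in the LIOM basis, the propagator factorizes as $e^{-iHt} = U e^{-iH_{\sigma}t}U^{\dag}$, where $H_{\sigma}$ is obtained from \cref{eqn:HMBL} by replacing each $\tau_{i}^{z}$ with $\sigma_{i}^{z}$ and is therefore diagonal in the physical basis. Hence $e^{-iH_{\sigma}t} = \prod_{I}e^{-iJ_{I}t\,\sigma_{I}^{z}}$ is precisely a product of diagonal phase gates, contributing a $p$-body $Z$-rotation of angle $\theta_{I} = -J_{I}t$ for every coupling $J_{I}$. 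Choosing the single-site layer of $U$ to be a layer of Hadamard gates---a legitimate quasilocal unitary under \cref{def:QLU}, with all $n \geq 2$ constituents set to the identity (the $n=1$ bound $\norm{\id - U_{k}^{(1)}}^{2} < q$ is met for an $\O{1}$ constant $q$)---turns $\braket{\sigma|e^{-iHt}|0\dots0}$ into $\bra{\sigma}U D U^{\dag}\ket{0\dots0}$ with $D = e^{-iH_{\sigma}t}$ and $U = U^{\dag}$ a Hadamard layer, which is exactly an IQP circuit measured in the computational basis.

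First I would fix the hard IQP family, taking the circuits of Ref.~\cite{Bermejo-Vega2018} (building on \cite{Bremner2011}) on $M$ logical qubits, for which approximate sampling to within a sufficiently small constant total variation distance cannot be done classically in $\mathrm{poly}(M)$ time unless the polynomial hierarchy collapses (together with the accompanying anticoncentration and average-case conjectures). I would then lay these $M$ qubits inside a contiguous window of $L = \lceil N^{\delta}\rceil$ physical sites, so that $M = \Theta(N^{\delta}) \to \infty$, realize each diagonal gate of angle $\theta_{I}$ by setting the corresponding coupling to $J_{I} = -\theta_{I}/t$, and set all remaining couplings to zero. Because every participating qubit lies within the window, each coupling used has physical range $i_{p}-i_{1} < L = N^{\delta}$, so the all-to-all (exponentially decaying) couplings that MBL permits are more than sufficient to reproduce an arbitrary gate graph after snaking the (possibly two-dimensional) architecture of \cite{Bermejo-Vega2018} into one dimension.

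The remaining step is to check that this assignment respects the MBL decay constraint and to read off the time threshold. \cref{def:QLU} (via \cref{eqn:HMBL}) requires $\abs{J_{I}} \leq e^{-(i_{p}-i_{1})/\xi}$; since the angles can be taken to be $\O{1}$ and the largest range is at most $N^{\delta}$, the binding inequality is $\O{1}/t \leq e^{-N^{\delta}/\xi}$, i.e.\ $t \geq \Omega(e^{N^{\delta}/\xi})$, with all shorter-range couplings automatically obeying their (weaker) bounds. For any such $t$ the embedding is \emph{exact}: the induced $\mathcal{D}$ equals the IQP output distribution verbatim, so sampling from any $\mathcal{D}_{\epsilon}$ that is $\epsilon$-close in total variation distance, as demanded by \cref{prob:mblsampling}, is literally approximate IQP sampling. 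The oracles required by \cref{prob:mblsampling} are trivially available, since both the Hadamard constituents of $U$ and the couplings $J_{I} = -\theta_{I}/t$ have closed-form expressions. A classical algorithm solving \cref{prob:mblsampling} in time $\mathrm{poly}(N) = \mathrm{poly}(M^{1/\delta})$ would therefore approximately sample from the hard IQP family in $\mathrm{poly}(M)$ time, a contradiction; this yields worst-case hardness, since a single explicit family suffices.

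The main obstacle, and the part requiring the most care, is importing the \emph{approximate}-sampling hardness rather than merely the exact-sampling hardness: this is where the conjectures of \cite{Bermejo-Vega2018}---average-case hardness of the relevant output probabilities, anticoncentration, and non-collapse of the polynomial hierarchy---enter, and where one must confirm that the chosen IQP family still satisfies their hypotheses once its qubits are squeezed into an $N^{\delta}$-site window and its connectivity is realized through long-range LIOM couplings. One must also verify that the sites outside the window (which we leave uncoupled with trivial single-site $U$, so that they remain in $\ket{0}$) contribute only a fixed deterministic string, leaving the IQP distribution on the active register---and hence the total-variation guarantee---untouched.
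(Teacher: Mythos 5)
Your proposal is correct and follows essentially the same route as the paper: choose the quasilocal unitary to be a Hadamard layer so that time evolution becomes a diagonal (IQP-type) circuit, implement the 2D connectivity of Architecture I of Ref.~\cite{Bermejo-Vega2018} via long-range exponentially decaying couplings in 1D, and obtain the threshold $t = \Omega(e^{N^{\delta}/\xi})$ from the requirement that $\O{1}$ phases accumulate under couplings bounded by $e^{-(i_{p}-i_{1})/\xi}$. The only cosmetic difference is that you embed the hard instance in an $N^{\delta}$-site window with idle sites elsewhere, whereas the paper sculpts an effective $N^{\delta}\times N^{1-\delta}$ grid out of all $N$ sites (and fixes specific coupling values and $\O{1}$ random fields rather than generic angles $\theta_{I}/t$); both satisfy the MBL definition and yield the same conclusion.
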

\begin{proof}
For simplicity, we start with $\delta = 1/2$ and give a family of hard instances of the problem, described by the couplings $J_{i_{1}\dots i_{p}}$ in the $\tau$ basis and the quasilocal unitaries $U$ that satisfy our definition of MBL.
We rely on the hardness construction of Ref.~\cite{Bermejo-Vega2018}, which shows that evolution under a nearest-neighbor, commuting 2D Hamiltonian for constant time can be hard to classically simulate.
We implement the nearest-neighbor 2D dynamics using selective long-range interactions in 1D to generate an effective square grid of size $\sqrt{N}\times \sqrt{N}$, as depicted in \cref{fig:IQP_hardness}.
The 1D Hamiltonian $H_{1}$ is an MBL Hamiltonian of the form in \cref{eqn:HMBL} with coupling coefficients given by
\begin{align}\label{eqn:hardness_coeff}
J_{i_{1}} &= h_{i_{1}} = \mathcal{O}(1), \\
J_{i_{1}i_{2}} &= \! \begin{cases}
-e^{-\frac{\sqrt{N}}{\xi}} & i_2 -i_1 = 1, \; \, i_1 \neq 0 \bmod \sqrt{N}\\
-e^{-\frac{\sqrt{N}}{\xi}} & i_2-i_1 = \sqrt{N}
\end{cases},\\
J_{i_{1}\dots i_{p}} &= 0 \text{ if $p \geq 3$},
\end{align}
(where we have assumed, for simplicity, $\sqrt{N}$ is an integer) and l-bits given by
\begin{align}\label{eqn:LIOMs_hardness_1}
\tau^z_i &= \sigma^x_i, \\
\label{eqn:LIOMs_hardness_2}
\tau^x_i &= \sigma^z_i.
\end{align}
The Hamiltonian $H_1$ clearly satisfies our definition of a canonical MBL Hamiltonian; the coupling coefficients decay sufficiently quickly, and it is easy to verify that the Hadamard gate $U_{i}^{(1)} = \frac{1}{\sqrt{2}}\begin{pmatrix} 1 & 1 \\ 1 & -1 \end{pmatrix} = \mathrm{H}$ is unitary, satisfies \cref{eqn:frob_closeness} with $q = 4$, and effects \cref{eqn:LIOMs_hardness_1,eqn:LIOMs_hardness_2}.
It can be seen that (up to a local basis change $\sigma^z_i \leftrightarrow \sigma^x_i$) time-evolving $\ket{0}^{N}$ under $H_1$ for time $t = \pi e^{\frac{\sqrt{N}}{\xi}}/4$ is equivalent to time-evolving $\ket{+}^{N}$ (with $\ket{+}$ the +1 eigenstate of $\sigma^{x}$) under the 2D Hamiltonian $H = -\sum_{\avg{i,j}} \frac{\pi}{4}\sigma^z_i \sigma^z_j + \sum_i \frac{\pi}{4}e^{\frac{\sqrt{N}}{\xi}} h_i \sigma_i^z$ for time $1$, where $\avg{i, j}$ denotes neighboring sites.
If the local fields $h_i$ are chosen randomly such that $e^{\frac{\sqrt{N}}{\xi}} h_i \in \{ 1, 3/2 \} \bmod 4 $ \cite{LocalFields} with equal probability, evolution under $H_1$ on the initial state $\ket{0}^N$ implements Architecture I of Ref.~\cite{Bermejo-Vega2018}.
This Architecture is a Measurement-Based Quantum Computing (MBQC) scheme that is based on the hardness of IQP sampling.
Essentially, a disordered product state is prepared on a 2D grid after which controlled $\sigma^{z}$ gates are applied across each edge and a measurement in the $\sigma^{x}$ is performed. 
Sampling from the output distribution of this scheme is hard assuming two plausible complexity-theoretic conjectures (namely: the Polynomial Hierarchy is infinite and approximating partition functions of Ising models is average-case hard --- the original paper contained a third conjecture related to anticoncentration of certain classes of random circuits, but this conjecture was proven in a later work \cite{Hangleiter2018}). 
Therefore, for times $t = \Omega(e^{\sqrt{N}/\xi})$, \cref{prob:mblsampling} is hard, assuming certain plausible conjectures in computational complexity \cite{Aaronson2011,Bremner2016,Bouland2019,Bermejo-Vega2018}.

Recent work in Ref.~\cite{Maskara2019a} allows us to extend $\delta = 1/2$ to any $0 < \delta < 1$.
Because Architecture I of Ref.~\cite{Bermejo-Vega2018} may be implemented on any rectangular grid with non-constant dimensions, we may sculpt an effective 2D grid of size $N^{\delta} \times N^{1-\delta}$, where the long-range coefficients in \cref{eqn:hardness_coeff} now couple sites at a distance of only $N^{\delta}$.
The rest of our arguments go forward unchanged, except the time it takes to implement the architecture is now exponential in $N^{\delta}/\xi$.
\end{proof} 

\cref{thm:hardness} thus proves that there is a family of MBL Hamiltonians that are hard to classically simulate after an exponentially long evolution time.
Note that while it is hard to simulate this particular family of Hamiltonians in the average case, per the results in Ref.~\cite{Bermejo-Vega2018}, we observe that this family of Hamiltonians is itself somewhat fine-tuned. 
We therefore say that classically simulating MBL Hamiltonians for exponentially long evolution time is hard in the worst case.
However, \cref{thm:easiness} provided a quasipolynomial time algorithm to simulate MBL Hamiltonians for polynomially long evolution time, even in the worst case (as the results were indepdendent of the couplings and quasilocal unitary definining the Hamiltonian). 
Together, \cref{thm:easiness,thm:hardness} point toward a possible transition in the classical worst-case hardness of \cref{prob:mblsampling} between polynomial and exponential evolution times (and prove such a transition between logarithmic and exponential times for Hamiltonians that are short-range in the p-bit basis). 
Furthermore, \cref{thm:hardness} stands in stark contrast to the easiness result from Ref.~\cite{Deshpande2018Dynamical} that single-particle localized systems of bosons admit an efficient sampling algorithm for all evolution times.
However, it matches the intuition behind the hardness result in Ref.~\cite{Deshpande2018Dynamical}, where sampling free boson systems becomes difficult when the system is no longer approximately separable.
Similarly, \cref{prob:mblsampling} becomes provably hard when the system has evolved sufficiently for entanglement to spread across a distance scaling polynomially with $N$, where this long-range entanglement means the state of the system is no longer approximately separable \cite{kim_local_2014}.

\begin{figure}
\centering
\includegraphics[width=\linewidth]{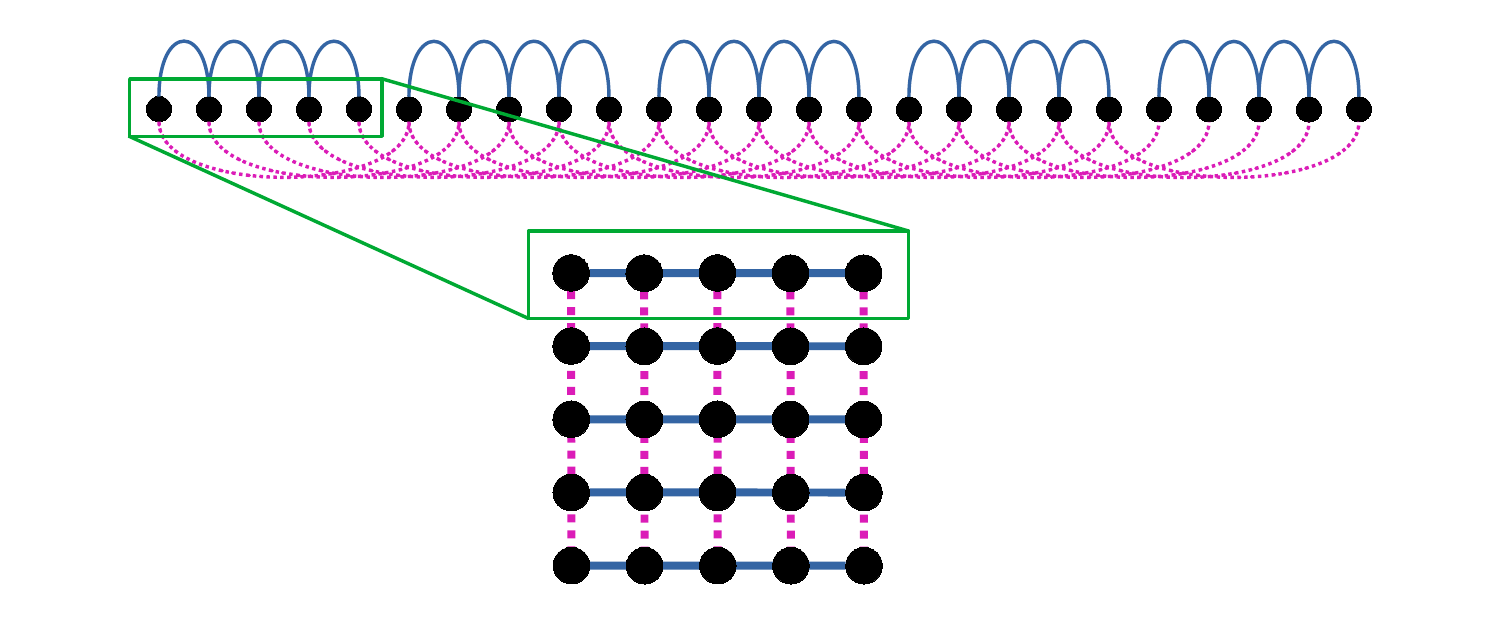}
\caption{Example illustrating the 1D-to-2D mapping of a Hamiltonian $H$ with coefficients given in \cref{eqn:hardness_coeff} acting on $N = 25$ qubits. 
The solid blue (dotted pink) lines depict the interactions with $\abs{i_{1}-i_{2}} = 1$ ($\abs{i_{1}-i_{2}} = \sqrt{N}$) in the true 1D lattice.
While the interactions differ in their locality, they have the same magnitude for simplicity in implementing the proposed architecture.
The single-site terms are not depicted.}
\label{fig:IQP_hardness}
\end{figure}

\section{Quantum complexity of simulating MBL systems}\label{sec:Quantum}
In this section, we focus on the quantum circuit complexity of approximately implementing the time-evolution operation $e^{-iHt}$ for an MBL Hamiltonian $H$.

\begin{definition}[Approximate circuit complexity]
The $\epsilon$-approximate circuit complexity $C_\epsilon$ of a unitary $U$ is the minimum circuit size $k$ of a circuit $G = G_k \ldots G_2 G_1$ composed of the standard gate set containing $\mathrm{CNOT}$, Hadamard, and $\pi/8$-phase gates ($\{\mathrm{CNOT}, \mathrm{H}, \mathrm{T}\}$) that approximates $U$ up to error $\epsilon$.
More formally, let 
\begin{align}
S_\epsilon(U) = \{G = G_k \ldots G_2 G_1 \text{ such that } \\ \nonumber
\norm{G-U} \leq \epsilon \ \text{and } G_i \in \{\mathrm{CNOT}, \mathrm{H}, \mathrm{T}\} \} 
\end{align}
be the set of all gate decompositions of $U$ over the standard gate set achieving error $\leq \epsilon$.
For a gate decomposition $G$, let $\abs{G} \equiv k$ denote its size.
Then
\begin{align}
C_\epsilon(U) \equiv \min_{G \in S_\epsilon(U)}{|G|}.
\end{align}
\end{definition}

We show that for evolution under MBL Hamiltonians, the complexity growth with respect to evolution time is slower than linear, which we denote through the symbol $o(t)$ \cite{asymptotic_notation} in the theorem below (while the gate complexity ultimately depends on the chosen gateset, the Solovay-Kitaev theorem ensures that this dependence is weak enough to not change this sublinear scaling).

\begin{theorem}[Sublinear growth of MBL circuit complexity]
For a Hamiltonian $H$ satisfying the criterion of MBL as defined in \cref{eqn:HMBL} and \cref{def:QLU} with $\xi < 1/(4\log 2)$, the approximate circuit complexity $C_\epsilon$ for constant $\epsilon$ obeys the bound
\begin{align}
C_\epsilon(e^{-iHt}) \leq \mathrm{poly}(N) \mathrm{polylog}(N^2t) \times o(t).
\end{align}
\end{theorem}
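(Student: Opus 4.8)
The plan is to reduce the task to compiling the \emph{truncated} evolution $e^{-i\tilde H t}$ introduced in \cref{sec:Truncation,sec:Easiness}, and then to count gates by exploiting the factorization $e^{-i\tilde H t}=\tilde U^{\dagger}e^{-i\tilde H_{\sigma}t}\tilde U$ that already appears in \cref{eqn:observable_sampling_mbl}. First I would split the error budget $\epsilon$ into a truncation piece and a compilation piece. Combining $\norm{e^{-iHt}-e^{-i\tilde H t}}\leq\norm{\Delta H}t$ with the bound \cref{eqn:error} and demanding $\norm{\Delta H}t\leq\epsilon/2$ forces the truncation radii to grow only logarithmically, namely $r_{U}=\Theta\!\left(\xi\log(N^{2}t/\epsilon)\right)$ and $r_{J}=\Theta\!\left(k^{-1}\log(Nt/\epsilon)\right)$ (this is \cref{eq:rUrJ} with $t$ promoted from $\mathrm{poly}\,N$ to a free variable). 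It then suffices to compile $e^{-i\tilde H t}$ to error $\epsilon/2$, and a final triangle inequality gives a circuit within $\epsilon$ of the true $e^{-iHt}$, so its size upper-bounds $C_{\epsilon}(e^{-iHt})$.

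Next I would count gates factor by factor. The diagonal factor $e^{-i\tilde H_{\sigma}t}=\prod_{I}e^{-i\tilde J_{I}t\,\sigma^{z}_{I}}$ is exact because the terms commute; it is a product of at most $\O{N2^{r_{J}}}$ multiqubit $Z$-rotations, each realizable with $\O{r_{J}}$ CNOTs and one single-qubit rotation. The quasilocal factors $\tilde U,\tilde U^{\dagger}$ each contain $\O{N r_{U}}$ constituents $U^{(n)}_{k}$ with $n\leq r_{U}$, and compiling an arbitrary $r_{U}$-qubit unitary costs $\O{4^{r_{U}}}$ two-qubit gates. Invoking the Solovay-Kitaev theorem turns every such gate into $\mathrm{polylog}(1/\epsilon')$ gates from $\{\mathrm{CNOT},\mathrm H,\mathrm T\}$, where the per-gate tolerance $\epsilon'$ is $\epsilon/2$ divided by the total gate count; this contributes only $\mathrm{polylog}(N^{2}t)$ overhead and makes the bound gate-set independent up to that factor.

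The binding contribution is $4^{r_{U}}$ from compiling the widest constituents of $\tilde U$. Substituting $r_{U}=2\xi\log(2C_{U}N^{2}t/\epsilon)$ gives $4^{r_{U}}=(2C_{U}N^{2}t/\epsilon)^{4\xi\log 2}$, whose $t$-dependence is $t^{4\xi\log 2}$. The hypothesis $\xi<1/(4\log 2)$ is precisely the statement that the exponent $4\xi\log 2<1$, so this factor is $o(t)$, while its $N$-dependence $(N^{2})^{4\xi\log 2}$ is absorbed into $\mathrm{poly}(N)$ and the $\epsilon$-dependence is a constant. The same smallness of $\xi$ (through $k=\Theta(1/\xi)$, so that $\log 2/k<1$) keeps the companion factor $2^{r_{J}}=(Nt/\epsilon)^{\log 2/k}$ sublinear in $t$ and polynomial in $N$. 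Folding the $\O{N r_{U}}$ and $\O{N2^{r_{J}}}$ prefactors together with the Solovay-Kitaev overhead into $\mathrm{poly}(N)\,\mathrm{polylog}(N^{2}t)$ then yields $C_{\epsilon}(e^{-iHt})\leq\mathrm{poly}(N)\,\mathrm{polylog}(N^{2}t)\times o(t)$.

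I expect the main obstacle to be the tension between truncation and compilation: enlarging $r_{U}$ suppresses the truncation error (and hence permits larger $t$) but inflates the per-constituent cost as $4^{r_{U}}$, so the whole argument rests on showing that the \emph{logarithmic} growth of $r_{U}$ with $t$, once amplified by the exponential $4^{r_{U}}$, still produces only a \emph{sublinear} power of $t$. Making that exponent strictly less than one is exactly where the quantitative assumption $\xi<1/(4\log 2)$ is needed, and the most delicate bookkeeping will be verifying that the subleading $r_{J}$-dependent factors remain $o(t)$ rather than silently reintroducing a linear-in-$t$ cost.
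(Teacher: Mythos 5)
Your proposal is correct and follows essentially the same route as the paper: truncate to $\tilde H$, use the factorization $e^{-i\tilde H t}=\tilde U^{\dagger}e^{-i\tilde H_{\sigma}t}\tilde U$, compile each factor with truncation radii growing logarithmically in $N^2t$, and observe that the resulting $4^{r_U}=(N^2t)^{4\xi\log 2}$ factor is sublinear in $t$ precisely when $\xi<1/(4\log 2)$. The only deviation (immaterial, in fact slightly sharper) is that you compile $e^{-i\tilde H_{\sigma}t}$ directly as $\O{N2^{r_J}}$ CNOT-ladder multiqubit $Z$-rotations, whereas the paper groups the commuting terms into $N$ width-$r_J$ unitaries $\tilde V_i$ and compiles each generically at cost $\O{r_J^2 4^{r_J}}$.
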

\begin{proof}
We leverage results from \cref{sec:Truncation}.
Our strategy to approximate the time-evolution unitary $e^{-iHt}$ is to apply instead the truncated evolution $e^{-i\tilde{H}t}$.
We have already argued that $\|e^{-iHt} - e^{-i\tilde{H}t}\| \leq \norm{\Delta H} t$, so, therefore, it suffices to choose $\tilde{H}$ so that $\norm{\Delta H} \leq \epsilon/t$.
In order to ensure that the unitary $e^{-i{H}t}$ can be applied with small circuit complexity, we make use of the fact that the (truncated) quasilocal unitary (approximately) diagonalizes the Hamiltonian:
\begin{align}
e^{-i\tilde{H}t} = \tilde{U}^\dag e^{-i\tilde{H}_\sigma t} \tilde{U}. \label{eqn:decomposition}
\end{align}
The cost of implementing the evolution under the MBL Hamiltonian comes from two parts: the first part stems from the cost of diagonalizing the Hamiltonian by implementing the quasilocal unitary $\tilde{U}$, and the second part comes from the complexity of applying time evolution under the truncated Hamiltonian in the physical basis, namely implementing $e^{-i\tilde{H}_\sigma t}$. This is the cost of implementing the last three sections (after the column of single-site observables) of the circuit depicted in \cref{fig:H1_contract}. 

The cost of applying $\tilde{U}$ can be upper bounded from the fact that it consists of gates that act on no more than $r_U = \Theta(\xi b \log N)$ many qubits at a time.
In the decomposition of $\tilde{U}$ as a quasilocal unitary, there are $N$ single-qubit unitaries, $2\ceil{N/2} = \mathcal{O}(N)$ two-qubit unitaries, and so on until the last layer of $\mathcal{O}(N)$ unitaries acting on $r_U$ qubits at a time.
Every unitary acting on $k$ qubits can be decomposed exactly into an $\mathcal{O}(k^2 2^{2k})$-long sequence of single-qubit and CNOT unitaries \cite{Nielsen2011}.
Using approximate synthesis algorithms over the Clifford+T gate set \cite{Kliuchnikov2015}, each of the single-qubit unitaries can be further decomposed into single-qubit gates from the standard gate set at only polylogarithmic overhead in the achieved error.
More precisely, the circuit complexity is upper bounded by 
\begin{align}
N \log(\delta^{-1}) + 4N \log(\delta^{-1})  \cdot 2^{2\cdot 2} + 9N \log(\delta^{-1}) \cdot 2^{2\cdot 3} + \ldots \nonumber
\\ + N r_U^2 \log(\delta^{-1}) \cdot 2^{2\cdot r_U}, \label{eqn:UCircuitComplexity}
\end{align}
where $\delta$ is the error made in approximating each local unitary.
The terms in \cref{eqn:UCircuitComplexity} correspond sequentially to the complexity of simulating the single-site, two-site, \dots, $r_{U}$-site terms. 
The first term does not contain the factor $2^{2k}$ because it corresponds to single-qubit unitaries.
The total error made in approximating $\tilde{U}$ then sums to
\begin{align}
& \delta \times (N + 4N\cdot 2^{2\cdot 2} + 9N \cdot 2^{2\cdot 3} + \ldots r_U^2 N \cdot 2^{2\cdot r_U})
\\ & \leq \delta N \times (1^2 \cdot 4^1 + 2^2 \cdot 4^2 + 3^2 \cdot 4^3 + \ldots {r_U}^2 \cdot 4^{r_U})
\\ & = N \delta \times \frac{4}{27} \left((9r_U^2 - 6r_U + 5)4^{r_U} -5\right)
\\ & \leq {2N \delta r_U^2 4^{r_U}},
\end{align}
which we set to be $\epsilon/6$ by choosing $\delta = \epsilon/(12Nr_U^2 4^{r_U})$.
Hence
\begin{align}
C_{\epsilon/6}(\tilde{U}) & \leq N \log(\delta^{-1}) \times (4 + 4\cdot 4^2 + 9 \cdot 4^3 + \ldots r_U^2 \cdot 4^{r_U})
\\ & = \mathcal{O}{(N \log(\delta^{-1}) r_U^2 4^{r_U})}
\\ & = \mathcal{O}\left(N 4^{r_U} r_U^2 \left(r_U \log(4) + \log\left(\frac{12Nr_U^2}{\epsilon}\right)\right) \right).
\end{align}

The cost of implementing $e^{-i\tilde{H}_\sigma t}$ can also similarly be upper bounded.
Here, for simplicity, we use the decomposition of $e^{-i\tilde{H}_{\sigma}t}$ from \cref{sec:Easiness}, where we combined unitaries acting on site $i$ (but not before $i$) into $\tilde{V}_{i}$.
This decomposition has $N$ unitaries of size at most $r_{J}$, meaning the gate complexity for  $e^{-i\tilde{H}_\sigma t}$ is upper bounded by $\mathcal{O}(N\log(\delta^{-1})r_J^2 4^{r_{J}})$, and the total error made in approximating these gates is thus $\mathcal{O}(N\delta r_J^2 4^{r_{J}})$. 
We again set this error equal to $\epsilon/6$ with a choice now of $\delta = \epsilon/(12 N r_J^2 4^{r_{J}})$, similarly yielding a gate complexity of 
\begin{align}
C_{\epsilon/6}(e^{-i\tilde{H}_\sigma t}) = \mathcal{O}\left(N 4^{r_J} r_J^2 \left(r_J \log(4) + \log\left(\frac{12Nr_J^2}{\epsilon}\right)\right) \right).
\end{align}

Combining everything, the total error for implementing the decomposition in \cref{eqn:decomposition} is $\epsilon/6 \times 3 = \epsilon/2$.
The total error in implementing $e^{-iHt}$ is thus upper bounded by the sum of the error in approximating $e^{-iHt}$ by $e^{-i\tilde{H}t}$ plus the error in decomposing $e^{-i\tilde{H}t}$ into a sequence of single and two-qubit gates:
\begin{align}
\epsilon/2 + \norm{\Delta H}t \leq \epsilon/2 + tC_{J}Nr_{J}e^{-kr_{J}} + tC_{U}N^{2}e^{-\frac{r_{U}}{2\xi}},
\end{align}
where we used \cref{eqn:error} to bound the second term.
We make the choices $r_J = (1.01) \log(Nt)/k$ and $r_U = 2.02 \xi \log(N^2t)$ so that the total error is at most
\begin{align}
& \epsilon/2 + C_J (Nt)^{-0.01} \log (Nt)/k + C_U (N^2t)^{-0.01} \nonumber
\\ & < \epsilon. 
\end{align}
With these choices, the total gate cost of simulating the entire circuit becomes $2C_{\epsilon/6}(\tilde{U}) + C_{\epsilon/6}(e^{-i\tilde{H}_\sigma t})$:
\begin{align}
C_\epsilon(e^{-iHt}) \leq \mathcal{O} &\left(N (N^2t)^{2.02 \xi \log 4} \mathrm{polylog}(N^2t) \right. \nonumber 
\\ & + N(Nt)^{1.01 \log 4/k}  \left. \mathrm{polylog}(Nt) \right).
\end{align}
As long as $\xi < 1/(2.02 \log 4) = 1/(4.04 \log 2)$, the exponent of $t$ in the first term is smaller than 1.
The same choice also ensures that the exponent of $t$ in the second term is smaller than 1 because $1.01 \log 4/k = 1.01 \log 4/(1/\xi - \log 2) < 2.02/3.04 < 1$.
\end{proof}
Thus, for sufficiently localized MBL Hamiltonians, the quantum circuit complexity is sublinear in time.
Such sublinear scaling contrasts MBL systems with chaotic Hamiltonians, which are conjectured to have quantum circuit complexity growing linearly with time, as supported by recent work in \cite{Brandao2021,Haferkamp2021a}. This provides a complexity-theoretic understanding of why MBL systems are unlikely to generate such chaotic dynamics.
This conclusion is intuitively consistent with the slow logarithmic spread of entanglement that is characteristic of MBL systems. 


\section{Conclusion and Outlook}\label{sec:Conclusion}
In this work, we have developed the best known formal results on the complexity of simulating MBL systems.
We have applied results in the literature to show that MBL systems evolved for time logarithmic in the system size admit an efficient classical strong simulaion, and, hence, sampling, algorithm. 
Further, we have demonstrated a quasipolynomial-time algorithm that can strongly simulate sufficiently localized MBL systems that have evolved for any (quasi)polynomially long time. 
While we have not quite provided a polynomial-time algorithm, the quasipolynomial-time algorithm is suggestive that possible improvements may lead to a formal proof of easiness. 
In particular, either the algorithm may be improved, potentially by leveraging the work on spectral tensor networks in Refs.~\cite{chandranSpectralTensorNetworks2015d, pollmannEfficientVariationalDiagonalization2016, wahlEfficientRepresentationFully2017} to make formal complexity statements in the case of quasilocal integrals of motion, or it may be possible to develop an algorithm that samples directly instead of going through the harder task of strong simulation. 
We leave these possible improvements (or the proof that they are impossible) as important open questions for future work. 
Furthermore, our proof holds only for Hamiltonians with LIOMs that are highly localized to a distance of about $\xi < 1/\log2$, in units of the lattice spacing. 
We do not consider this restriction to be too problematic, as previous work, e.g., Ref.~\cite{DeRoeck2017Stability}, has demonstrated that LIOMs may need to be highly localized for MBL systems to remain stable. 
It would be interesting, however, to understand more fully if this restriction is an artifact of our techniques, or if it is explained by some physical transition in MBL systems. 
Additionally, all of our results are based on bounding the worst-case scenario without explicitly accounting for disorder in our couplings, and studying the effect of disorder is an interesting open question.
Finally, it is also crucial to explore the easiness of simulating MBL systems when one only has access to $H$ in the p-bit basis. 

Apart from our easiness results, we have shown by a comparison to the problem of sampling from IQP circuits that a family of random MBL systems becomes hard to simulate after a time exponentially long in the system size. 
This family, while entirely consistent with our definition of MBL, is rather fine-tuned and likely has little overlap with the family of MBL Hamiltonians induced by disorder in the physical basis. 
Therefore, it would be quite valuable to determine in future work whether average-case hardness at exponential evolution times also holds for a more natural family of disorder-induced MBL Hamiltonians.

Additionally, we have also detailed the gate complexity of quantum simulation of MBL systems, and we have shown that for systems with localization length $\xi < 1/(4\log2)$, this gate complexity is sublinear. As for our results on classical simulation, it would be interesting to determine whether this localization length restriction is an artifact of our proof techniques or is physical. It would also be enlightening to investigate the connection between these results and the literature on fast-forwarding Hamiltonian evolution \cite{Atia2017}. 

Finally, so far we have specified entirely to MBL systems defined in 1D.
Indeed, there is significant debate over whether disorder-induced MBL can even exist in higher dimensions \cite{Abanin2019} (for example, the proof of MBL and LIOM structure in Ref.~\cite{imbrie_many-body_2016} relies crucially on the 1D nature of the system).
However, the natural generalization of our definition of MBL to higher dimensions would allow for MBL Hamiltonians that implement Architecture I of Ref.~\cite{Bermejo-Vega2018} directly (i.e., without sculpting an effective 2D grid using exponentially decaying interactions) in constant time. 
Thus, sampling from higher-dimensional MBL systems becomes hard very quickly, after evolution time $t = \mathcal{O}(1)$. However, other less natural extensions might exclude fast implementations of Architecture I, so the hardness of simulating higher-dimensional MBL systems still deserves further examination.
\begin{acknowledgments}
	We thank Eli Chertkov, Elizabeth Crosson, Bill Fefferman, James Garrison, Vedika Khemani, Nishad Maskara, Paraj Titum, Minh C. Tran, and Brayden Ware for helpful discussions.
	A.\,E., C.\,L.\,B., and A.\,V.\,G.\,acknowledge funding from the DoD, DoE ASCR Accelerated Research in Quantum Computing program (award No.~DE-SC0020312), NSF PFCQC program, AFOSR, DoE QSA, NSF QLCI (award No.~OMA-2120757), DoE ASCR Quantum Testbed Pathfinder program (award No.~DE-SC0019040), AFOSR MURI, U.S.~Department of Energy Award No.~DE-SC0019449, ARO MURI, and DARPA SAVaNT ADVENT. 
	A.\,D.\, acknowledges support from the National Science Foundation RAISE-TAQS 1839204 and Amazon Web Services, AWS Quantum Program.
	This research was performed in part while C.\,L.\,B.~held an NRC Research Associateship award at the National Institute of Standards and Technology.
	The Institute for Quantum Information and Matter is an NSF Physics Frontiers Center PHY-1733907. D.\,A. acknowledges support from the Swiss National Science Foundation and from the European Research Council (ERC) under the European Union's Horizon 2020 research and innovation program (grant agreement No. 864597).
	
\end{acknowledgments}
\bibliographystyle{apsrev4-2}
\bibliography{arxiv_v1.bib}

\begin{thebibliography}{48}%
\makeatletter
\providecommand \@ifxundefined [1]{%
 \@ifx{#1\undefined}
}%
\providecommand \@ifnum [1]{%
 \ifnum #1\expandafter \@firstoftwo
 \else \expandafter \@secondoftwo
 \fi
}%
\providecommand \@ifx [1]{%
 \ifx #1\expandafter \@firstoftwo
 \else \expandafter \@secondoftwo
 \fi
}%
\providecommand \natexlab [1]{#1}%
\providecommand \enquote  [1]{``#1''}%
\providecommand \bibnamefont  [1]{#1}%
\providecommand \bibfnamefont [1]{#1}%
\providecommand \citenamefont [1]{#1}%
\providecommand \href@noop [0]{\@secondoftwo}%
\providecommand \href [0]{\begingroup \@sanitize@url \@href}%
\providecommand \@href[1]{\@@startlink{#1}\@@href}%
\providecommand \@@href[1]{\endgroup#1\@@endlink}%
\providecommand \@sanitize@url [0]{\catcode `\\12\catcode `\$12\catcode
  `\&12\catcode `\#12\catcode `\^12\catcode `\_12\catcode `\%12\relax}%
\providecommand \@@startlink[1]{}%
\providecommand \@@endlink[0]{}%
\providecommand \url  [0]{\begingroup\@sanitize@url \@url }%
\providecommand \@url [1]{\endgroup\@href {#1}{\urlprefix }}%
\providecommand \urlprefix  [0]{URL }%
\providecommand \Eprint [0]{\href }%
\providecommand \doibase [0]{https://doi.org/}%
\providecommand \selectlanguage [0]{\@gobble}%
\providecommand \bibinfo  [0]{\@secondoftwo}%
\providecommand \bibfield  [0]{\@secondoftwo}%
\providecommand \translation [1]{[#1]}%
\providecommand \BibitemOpen [0]{}%
\providecommand \bibitemStop [0]{}%
\providecommand \bibitemNoStop [0]{.\EOS\space}%
\providecommand \EOS [0]{\spacefactor3000\relax}%
\providecommand \BibitemShut  [1]{\csname bibitem#1\endcsname}%
\let\auto@bib@innerbib\@empty
\bibitem [{\citenamefont {Aaronson}\ and\ \citenamefont
  {Arkhipov}(2011)}]{Aaronson2011}%
  \BibitemOpen
  \bibfield  {author} {\bibinfo {author} {\bibfnamefont {S.}~\bibnamefont
  {Aaronson}}\ and\ \bibinfo {author} {\bibfnamefont {A.}~\bibnamefont
  {Arkhipov}},\ }in\ \href {https://doi.org/10.1145/1993636.1993682} {\emph
  {\bibinfo {booktitle} {Proceedings of the Forty-Third Annual {{ACM}}
  Symposium on {{Theory}} of Computing}}}\ (\bibinfo  {publisher} {{ACM
  Press}},\ \bibinfo {address} {{New York, New York, USA}},\ \bibinfo {year}
  {2011})\ p.\ \bibinfo {pages} {333}\BibitemShut {NoStop}%
\bibitem [{\citenamefont {Arute}\ \emph {et~al.}(2019)\citenamefont {Arute}
  \emph {et~al.}}]{arute_quantum_2019}%
  \BibitemOpen
  \bibfield  {author} {\bibinfo {author} {\bibfnamefont {F.}~\bibnamefont
  {Arute}} \emph {et~al.},\ }\href {https://doi.org/10.1038/s41586-019-1666-5}
  {\bibfield  {journal} {\bibinfo  {journal} {Nature}\ }\textbf {\bibinfo
  {volume} {574}},\ \bibinfo {pages} {505} (\bibinfo {year}
  {2019})}\BibitemShut {NoStop}%
\bibitem [{\citenamefont {Zhong}\ \emph {et~al.}(2020)\citenamefont {Zhong}
  \emph {et~al.}}]{zhong_quantum_2020}%
  \BibitemOpen
  \bibfield  {author} {\bibinfo {author} {\bibfnamefont {H.-S.}\ \bibnamefont
  {Zhong}} \emph {et~al.},\ }\href {https://doi.org/10.1126/science.abe8770}
  {\bibfield  {journal} {\bibinfo  {journal} {Science}\ }\textbf {\bibinfo
  {volume} {370}},\ \bibinfo {pages} {1460} (\bibinfo {year}
  {2020})}\BibitemShut {NoStop}%
\bibitem [{\citenamefont {Zhong}\ \emph {et~al.}(2021)\citenamefont {Zhong},
  \citenamefont {Deng}, \citenamefont {Qin}, \citenamefont {Wang},
  \citenamefont {Chen}, \citenamefont {Peng}, \citenamefont {Luo},
  \citenamefont {Wu}, \citenamefont {Gong}, \citenamefont {Su}, \citenamefont
  {Hu}, \citenamefont {Hu}, \citenamefont {Yang}, \citenamefont {Zhang},
  \citenamefont {Li}, \citenamefont {Li}, \citenamefont {Jiang}, \citenamefont
  {Gan}, \citenamefont {Yang}, \citenamefont {You}, \citenamefont {Wang},
  \citenamefont {Li}, \citenamefont {Liu}, \citenamefont {Renema},
  \citenamefont {Lu},\ and\ \citenamefont
  {Pan}}]{zhong_phase-programmable_2021}%
  \BibitemOpen
  \bibfield  {author} {\bibinfo {author} {\bibfnamefont {H.-S.}\ \bibnamefont
  {Zhong}}, \bibinfo {author} {\bibfnamefont {Y.-H.}\ \bibnamefont {Deng}},
  \bibinfo {author} {\bibfnamefont {J.}~\bibnamefont {Qin}}, \bibinfo {author}
  {\bibfnamefont {H.}~\bibnamefont {Wang}}, \bibinfo {author} {\bibfnamefont
  {M.-C.}\ \bibnamefont {Chen}}, \bibinfo {author} {\bibfnamefont {L.-C.}\
  \bibnamefont {Peng}}, \bibinfo {author} {\bibfnamefont {Y.-H.}\ \bibnamefont
  {Luo}}, \bibinfo {author} {\bibfnamefont {D.}~\bibnamefont {Wu}}, \bibinfo
  {author} {\bibfnamefont {S.-Q.}\ \bibnamefont {Gong}}, \bibinfo {author}
  {\bibfnamefont {H.}~\bibnamefont {Su}}, \bibinfo {author} {\bibfnamefont
  {Y.}~\bibnamefont {Hu}}, \bibinfo {author} {\bibfnamefont {P.}~\bibnamefont
  {Hu}}, \bibinfo {author} {\bibfnamefont {X.-Y.}\ \bibnamefont {Yang}},
  \bibinfo {author} {\bibfnamefont {W.-J.}\ \bibnamefont {Zhang}}, \bibinfo
  {author} {\bibfnamefont {H.}~\bibnamefont {Li}}, \bibinfo {author}
  {\bibfnamefont {Y.}~\bibnamefont {Li}}, \bibinfo {author} {\bibfnamefont
  {X.}~\bibnamefont {Jiang}}, \bibinfo {author} {\bibfnamefont
  {L.}~\bibnamefont {Gan}}, \bibinfo {author} {\bibfnamefont {G.}~\bibnamefont
  {Yang}}, \bibinfo {author} {\bibfnamefont {L.}~\bibnamefont {You}}, \bibinfo
  {author} {\bibfnamefont {Z.}~\bibnamefont {Wang}}, \bibinfo {author}
  {\bibfnamefont {L.}~\bibnamefont {Li}}, \bibinfo {author} {\bibfnamefont
  {N.-L.}\ \bibnamefont {Liu}}, \bibinfo {author} {\bibfnamefont {J.~J.}\
  \bibnamefont {Renema}}, \bibinfo {author} {\bibfnamefont {C.-Y.}\
  \bibnamefont {Lu}},\ and\ \bibinfo {author} {\bibfnamefont {J.-W.}\
  \bibnamefont {Pan}},\ }\href {https://doi.org/10.1103/PhysRevLett.127.180502}
  {\bibfield  {journal} {\bibinfo  {journal} {Phys. Rev. Lett.}\ }\textbf
  {\bibinfo {volume} {127}},\ \bibinfo {pages} {180502} (\bibinfo {year}
  {2021})}\BibitemShut {NoStop}%
\bibitem [{\citenamefont {Deshpande}\ \emph {et~al.}(2018)\citenamefont
  {Deshpande}, \citenamefont {Fefferman}, \citenamefont {Tran}, \citenamefont
  {Foss-Feig},\ and\ \citenamefont {Gorshkov}}]{Deshpande2018Dynamical}%
  \BibitemOpen
  \bibfield  {author} {\bibinfo {author} {\bibfnamefont {A.}~\bibnamefont
  {Deshpande}}, \bibinfo {author} {\bibfnamefont {B.}~\bibnamefont
  {Fefferman}}, \bibinfo {author} {\bibfnamefont {M.~C.}\ \bibnamefont {Tran}},
  \bibinfo {author} {\bibfnamefont {M.}~\bibnamefont {Foss-Feig}},\ and\
  \bibinfo {author} {\bibfnamefont {A.~V.}\ \bibnamefont {Gorshkov}},\ }\href
  {https://doi.org/10.1103/PhysRevLett.121.030501} {\bibfield  {journal}
  {\bibinfo  {journal} {Phys. Rev. Lett.}\ }\textbf {\bibinfo {volume} {121}},\
  \bibinfo {pages} {030501} (\bibinfo {year} {2018})}\BibitemShut {NoStop}%
\bibitem [{\citenamefont {Muraleedharan}\ \emph {et~al.}(2018)\citenamefont
  {Muraleedharan}, \citenamefont {Miyake},\ and\ \citenamefont
  {Deutsch}}]{Muraleedharan2018}%
  \BibitemOpen
  \bibfield  {author} {\bibinfo {author} {\bibfnamefont {G.}~\bibnamefont
  {Muraleedharan}}, \bibinfo {author} {\bibfnamefont {A.}~\bibnamefont
  {Miyake}},\ and\ \bibinfo {author} {\bibfnamefont {I.~H.}\ \bibnamefont
  {Deutsch}},\ }\href {https://doi.org/10.1088/1367-2630/ab0610} {\bibfield
  {journal} {\bibinfo  {journal} {New J. Phys.}\ }\textbf {\bibinfo {volume}
  {21}},\ \bibinfo {pages} {055003} (\bibinfo {year} {2018})}\BibitemShut
  {NoStop}%
\bibitem [{\citenamefont {Maskara}\ \emph {et~al.}(2019)\citenamefont
  {Maskara}, \citenamefont {Deshpande}, \citenamefont {Tran}, \citenamefont
  {Ehrenberg}, \citenamefont {Fefferman},\ and\ \citenamefont
  {Gorshkov}}]{Maskara2019a}%
  \BibitemOpen
  \bibfield  {author} {\bibinfo {author} {\bibfnamefont {N.}~\bibnamefont
  {Maskara}}, \bibinfo {author} {\bibfnamefont {A.}~\bibnamefont {Deshpande}},
  \bibinfo {author} {\bibfnamefont {M.~C.}\ \bibnamefont {Tran}}, \bibinfo
  {author} {\bibfnamefont {A.}~\bibnamefont {Ehrenberg}}, \bibinfo {author}
  {\bibfnamefont {B.}~\bibnamefont {Fefferman}},\ and\ \bibinfo {author}
  {\bibfnamefont {A.~V.}\ \bibnamefont {Gorshkov}}\ }(\bibinfo {year} {2019})\
  \Eprint {https://arxiv.org/abs/1906.04178} {arXiv:1906.04178} \BibitemShut
  {NoStop}%
\bibitem [{\citenamefont {Thouless}(1974)}]{Thouless1974Electrons}%
  \BibitemOpen
  \bibfield  {author} {\bibinfo {author} {\bibfnamefont {D.}~\bibnamefont
  {Thouless}},\ }\href@noop {} {\bibfield  {journal} {\bibinfo  {journal}
  {Physics Reports}\ }\textbf {\bibinfo {volume} {13}},\ \bibinfo {pages} {93 }
  (\bibinfo {year} {1974})}\BibitemShut {NoStop}%
\bibitem [{\citenamefont {Kramer}\ and\ \citenamefont
  {MacKinnon}(1993)}]{Kramer1993Localization}%
  \BibitemOpen
  \bibfield  {author} {\bibinfo {author} {\bibfnamefont {B.}~\bibnamefont
  {Kramer}}\ and\ \bibinfo {author} {\bibfnamefont {A.}~\bibnamefont
  {MacKinnon}},\ }\href@noop {} {\bibfield  {journal} {\bibinfo  {journal}
  {Reports on Progress in Physics}\ }\textbf {\bibinfo {volume} {56}},\
  \bibinfo {pages} {1469} (\bibinfo {year} {1993})}\BibitemShut {NoStop}%
\bibitem [{\citenamefont {Billy}\ \emph {et~al.}(2008)\citenamefont {Billy},
  \citenamefont {Josse}, \citenamefont {Zuo}, \citenamefont {Bernard},
  \citenamefont {Hambrecht}, \citenamefont {Lugan}, \citenamefont
  {Cl{\'e}ment}, \citenamefont {Sanchez-Palencia}, \citenamefont {Bouyer},\
  and\ \citenamefont {Aspect}}]{Billy2008Direct}%
  \BibitemOpen
  \bibfield  {author} {\bibinfo {author} {\bibfnamefont {J.}~\bibnamefont
  {Billy}}, \bibinfo {author} {\bibfnamefont {V.}~\bibnamefont {Josse}},
  \bibinfo {author} {\bibfnamefont {Z.}~\bibnamefont {Zuo}}, \bibinfo {author}
  {\bibfnamefont {A.}~\bibnamefont {Bernard}}, \bibinfo {author} {\bibfnamefont
  {B.}~\bibnamefont {Hambrecht}}, \bibinfo {author} {\bibfnamefont
  {P.}~\bibnamefont {Lugan}}, \bibinfo {author} {\bibfnamefont
  {D.}~\bibnamefont {Cl{\'e}ment}}, \bibinfo {author} {\bibfnamefont
  {L.}~\bibnamefont {Sanchez-Palencia}}, \bibinfo {author} {\bibfnamefont
  {P.}~\bibnamefont {Bouyer}},\ and\ \bibinfo {author} {\bibfnamefont
  {A.}~\bibnamefont {Aspect}},\ }\href@noop {} {\bibfield  {journal} {\bibinfo
  {journal} {Nature}\ }\textbf {\bibinfo {volume} {453}},\ \bibinfo {pages}
  {891 EP } (\bibinfo {year} {2008})}\BibitemShut {NoStop}%
\bibitem [{\citenamefont {Roati}\ \emph {et~al.}(2008)\citenamefont {Roati},
  \citenamefont {D'Errico}, \citenamefont {Fallani}, \citenamefont {Fattori},
  \citenamefont {Fort}, \citenamefont {Zaccanti}, \citenamefont {Modugno},
  \citenamefont {Modugno},\ and\ \citenamefont {Inguscio}}]{Roati2008Anderson}%
  \BibitemOpen
  \bibfield  {author} {\bibinfo {author} {\bibfnamefont {G.}~\bibnamefont
  {Roati}}, \bibinfo {author} {\bibfnamefont {C.}~\bibnamefont {D'Errico}},
  \bibinfo {author} {\bibfnamefont {L.}~\bibnamefont {Fallani}}, \bibinfo
  {author} {\bibfnamefont {M.}~\bibnamefont {Fattori}}, \bibinfo {author}
  {\bibfnamefont {C.}~\bibnamefont {Fort}}, \bibinfo {author} {\bibfnamefont
  {M.}~\bibnamefont {Zaccanti}}, \bibinfo {author} {\bibfnamefont
  {G.}~\bibnamefont {Modugno}}, \bibinfo {author} {\bibfnamefont
  {M.}~\bibnamefont {Modugno}},\ and\ \bibinfo {author} {\bibfnamefont
  {M.}~\bibnamefont {Inguscio}},\ }\href@noop {} {\bibfield  {journal}
  {\bibinfo  {journal} {Nature}\ }\textbf {\bibinfo {volume} {453}},\ \bibinfo
  {pages} {895 EP } (\bibinfo {year} {2008})}\BibitemShut {NoStop}%
\bibitem [{\citenamefont {Nandkishore}\ and\ \citenamefont
  {Huse}(2015)}]{Nandkishore2015Many}%
  \BibitemOpen
  \bibfield  {author} {\bibinfo {author} {\bibfnamefont {R.}~\bibnamefont
  {Nandkishore}}\ and\ \bibinfo {author} {\bibfnamefont {D.~A.}\ \bibnamefont
  {Huse}},\ }\href@noop {} {\bibfield  {journal} {\bibinfo  {journal} {Annual
  Review of Condensed Matter Physics}\ }\textbf {\bibinfo {volume} {6}},\
  \bibinfo {pages} {15} (\bibinfo {year} {2015})}\BibitemShut {NoStop}%
\bibitem [{\citenamefont {Abanin}\ and\ \citenamefont
  {Papi\ifmmode~\acute{c}\else \'{c}\fi{}}(2017)}]{Abanin2017Recent}%
  \BibitemOpen
  \bibfield  {author} {\bibinfo {author} {\bibfnamefont {D.~A.}\ \bibnamefont
  {Abanin}}\ and\ \bibinfo {author} {\bibfnamefont {Z.}~\bibnamefont
  {Papi\ifmmode~\acute{c}\else \'{c}\fi{}}},\ }\href@noop {} {\bibfield
  {journal} {\bibinfo  {journal} {Annalen der Physik}\ }\textbf {\bibinfo
  {volume} {529}},\ \bibinfo {pages} {1700169} (\bibinfo {year}
  {2017})}\BibitemShut {NoStop}%
\bibitem [{\citenamefont {Abanin}\ \emph {et~al.}(2019)\citenamefont {Abanin},
  \citenamefont {Altman}, \citenamefont {Bloch},\ and\ \citenamefont
  {Serbyn}}]{Abanin2019}%
  \BibitemOpen
  \bibfield  {author} {\bibinfo {author} {\bibfnamefont {D.~A.}\ \bibnamefont
  {Abanin}}, \bibinfo {author} {\bibfnamefont {E.}~\bibnamefont {Altman}},
  \bibinfo {author} {\bibfnamefont {I.}~\bibnamefont {Bloch}},\ and\ \bibinfo
  {author} {\bibfnamefont {M.}~\bibnamefont {Serbyn}},\ }\href
  {https://doi.org/10.1103/RevModPhys.91.021001} {\bibfield  {journal}
  {\bibinfo  {journal} {Rev. Mod. Phys.}\ }\textbf {\bibinfo {volume} {91}},\
  \bibinfo {pages} {021001} (\bibinfo {year} {2019})}\BibitemShut {NoStop}%
\bibitem [{\citenamefont {Serbyn}\ \emph {et~al.}(2013)\citenamefont {Serbyn},
  \citenamefont {Papi\ifmmode~\acute{c}\else \'{c}\fi{}},\ and\ \citenamefont
  {Abanin}}]{Serbyn2013Local}%
  \BibitemOpen
  \bibfield  {author} {\bibinfo {author} {\bibfnamefont {M.}~\bibnamefont
  {Serbyn}}, \bibinfo {author} {\bibfnamefont {Z.}~\bibnamefont
  {Papi\ifmmode~\acute{c}\else \'{c}\fi{}}},\ and\ \bibinfo {author}
  {\bibfnamefont {D.~A.}\ \bibnamefont {Abanin}},\ }\href@noop {} {\bibfield
  {journal} {\bibinfo  {journal} {Phys. Rev. Lett.}\ }\textbf {\bibinfo
  {volume} {111}},\ \bibinfo {pages} {127201} (\bibinfo {year}
  {2013})}\BibitemShut {NoStop}%
\bibitem [{\citenamefont {Huse}\ \emph {et~al.}(2014)\citenamefont {Huse},
  \citenamefont {Nandkishore},\ and\ \citenamefont
  {Oganesyan}}]{huse_2014_phenomenology}%
  \BibitemOpen
  \bibfield  {author} {\bibinfo {author} {\bibfnamefont {D.~A.}\ \bibnamefont
  {Huse}}, \bibinfo {author} {\bibfnamefont {R.}~\bibnamefont {Nandkishore}},\
  and\ \bibinfo {author} {\bibfnamefont {V.}~\bibnamefont {Oganesyan}},\ }\href
  {https://doi.org/10.1103/PhysRevB.90.174202} {\bibfield  {journal} {\bibinfo
  {journal} {Physical Review B}\ }\textbf {\bibinfo {volume} {90}},\ \bibinfo
  {pages} {174202} (\bibinfo {year} {2014})}\BibitemShut {NoStop}%
\bibitem [{\citenamefont {Chandran}\ \emph
  {et~al.}(2015{\natexlab{a}})\citenamefont {Chandran}, \citenamefont {Kim},
  \citenamefont {Vidal},\ and\ \citenamefont
  {Abanin}}]{Chandran2015Constructing}%
  \BibitemOpen
  \bibfield  {author} {\bibinfo {author} {\bibfnamefont {A.}~\bibnamefont
  {Chandran}}, \bibinfo {author} {\bibfnamefont {I.~H.}\ \bibnamefont {Kim}},
  \bibinfo {author} {\bibfnamefont {G.}~\bibnamefont {Vidal}},\ and\ \bibinfo
  {author} {\bibfnamefont {D.~A.}\ \bibnamefont {Abanin}},\ }\href@noop {}
  {\bibfield  {journal} {\bibinfo  {journal} {Phys. Rev. B}\ }\textbf {\bibinfo
  {volume} {91}},\ \bibinfo {pages} {085425} (\bibinfo {year}
  {2015}{\natexlab{a}})}\BibitemShut {NoStop}%
\bibitem [{\citenamefont {Ros}\ \emph {et~al.}(2015)\citenamefont {Ros},
  \citenamefont {M{\"u}ller},\ and\ \citenamefont
  {Scardicchio}}]{Ros2015Integrals}%
  \BibitemOpen
  \bibfield  {author} {\bibinfo {author} {\bibfnamefont {V.}~\bibnamefont
  {Ros}}, \bibinfo {author} {\bibfnamefont {M.}~\bibnamefont {M{\"u}ller}},\
  and\ \bibinfo {author} {\bibfnamefont {A.}~\bibnamefont {Scardicchio}},\
  }\href@noop {} {\bibfield  {journal} {\bibinfo  {journal} {Nuclear Physics
  B}\ }\textbf {\bibinfo {volume} {891}},\ \bibinfo {pages} {420 } (\bibinfo
  {year} {2015})}\BibitemShut {NoStop}%
\bibitem [{\citenamefont {Imbrie}(2016)}]{imbrie_many-body_2016}%
  \BibitemOpen
  \bibfield  {author} {\bibinfo {author} {\bibfnamefont {J.~Z.}\ \bibnamefont
  {Imbrie}},\ }\href {https://doi.org/10.1007/s10955-016-1508-x} {\bibfield
  {journal} {\bibinfo  {journal} {Journal of Statistical Physics}\ }\textbf
  {\bibinfo {volume} {163}},\ \bibinfo {pages} {998} (\bibinfo {year}
  {2016})}\BibitemShut {NoStop}%
\bibitem [{\citenamefont {Terhal}\ and\ \citenamefont
  {DiVincenzo}(2002)}]{Terhal2002a}%
  \BibitemOpen
  \bibfield  {author} {\bibinfo {author} {\bibfnamefont {B.~M.}\ \bibnamefont
  {Terhal}}\ and\ \bibinfo {author} {\bibfnamefont {D.~P.}\ \bibnamefont
  {DiVincenzo}},\ }\href {https://doi.org/10.26421/QIC4.2} {\bibfield
  {journal} {\bibinfo  {journal} {Quantum Inf. Comput.}\ }\textbf {\bibinfo
  {volume} {4}},\ \bibinfo {pages} {134} (\bibinfo {year} {2002})}\BibitemShut
  {NoStop}%
\bibitem [{asy()}]{asymptotic_notation}%
  \BibitemOpen
  \href@noop {} {}\bibinfo {note} {We say that $f = \mathcal{O}(g)$ if $f/g
  \nrightarrow \infty$ as $n \to \infty$, and $f = \Omega(g) \iff g =
  \mathcal{O}(f)$. Similarly, $f = o(g)$ means $f/g \to 0$ as $n \to \infty$,
  and $f = \omega(g) \iff g = o(f)$. Finally, if $f = \mathcal{O}(g)$ and $g =
  \mathcal{O}(f)$, we say $f = \Theta(g)$ (and $g = \Theta(f)$). The precise
  asymptotic dependence on $n$ can be arbitrary. Additionally, a tilde over the
  asymptotic symbol, such as $\tilde{\mathcal{O}}(g)$, means that we are
  ignoring logarithmic factors in $g$.}\BibitemShut {Stop}%
\bibitem [{\citenamefont {{Bermejo-Vega}}\ \emph {et~al.}(2018)\citenamefont
  {{Bermejo-Vega}}, \citenamefont {Hangleiter}, \citenamefont {Schwarz},
  \citenamefont {Raussendorf},\ and\ \citenamefont
  {Eisert}}]{Bermejo-Vega2018}%
  \BibitemOpen
  \bibfield  {author} {\bibinfo {author} {\bibfnamefont {J.}~\bibnamefont
  {{Bermejo-Vega}}}, \bibinfo {author} {\bibfnamefont {D.}~\bibnamefont
  {Hangleiter}}, \bibinfo {author} {\bibfnamefont {M.}~\bibnamefont {Schwarz}},
  \bibinfo {author} {\bibfnamefont {R.}~\bibnamefont {Raussendorf}},\ and\
  \bibinfo {author} {\bibfnamefont {J.}~\bibnamefont {Eisert}},\ }\href
  {https://doi.org/10.1103/PhysRevX.8.021010} {\bibfield  {journal} {\bibinfo
  {journal} {Phys. Rev. X}\ }\textbf {\bibinfo {volume} {8}},\ \bibinfo {pages}
  {021010} (\bibinfo {year} {2018})}\BibitemShut {NoStop}%
\bibitem [{\citenamefont {Osborne}(2006)}]{Osborne2006}%
  \BibitemOpen
  \bibfield  {author} {\bibinfo {author} {\bibfnamefont {T.~J.}\ \bibnamefont
  {Osborne}},\ }\href {https://doi.org/10.1103/PhysRevLett.97.157202}
  {\bibfield  {journal} {\bibinfo  {journal} {Phys. Rev. Lett.}\ }\textbf
  {\bibinfo {volume} {97}},\ \bibinfo {pages} {157202} (\bibinfo {year}
  {2006})}\BibitemShut {NoStop}%
\bibitem [{\citenamefont {Brown}\ \emph
  {et~al.}(2016{\natexlab{a}})\citenamefont {Brown}, \citenamefont {Roberts},
  \citenamefont {Susskind}, \citenamefont {Swingle},\ and\ \citenamefont
  {Zhao}}]{Brown2016}%
  \BibitemOpen
  \bibfield  {author} {\bibinfo {author} {\bibfnamefont {A.~R.}\ \bibnamefont
  {Brown}}, \bibinfo {author} {\bibfnamefont {D.~A.}\ \bibnamefont {Roberts}},
  \bibinfo {author} {\bibfnamefont {L.}~\bibnamefont {Susskind}}, \bibinfo
  {author} {\bibfnamefont {B.}~\bibnamefont {Swingle}},\ and\ \bibinfo {author}
  {\bibfnamefont {Y.}~\bibnamefont {Zhao}},\ }\href
  {https://doi.org/10.1103/PhysRevLett.116.191301} {\bibfield  {journal}
  {\bibinfo  {journal} {Phys. Rev. Lett.}\ }\textbf {\bibinfo {volume} {116}},\
  \bibinfo {pages} {191301} (\bibinfo {year} {2016}{\natexlab{a}})}\BibitemShut
  {NoStop}%
\bibitem [{\citenamefont {Brown}\ \emph
  {et~al.}(2016{\natexlab{b}})\citenamefont {Brown}, \citenamefont {Roberts},
  \citenamefont {Susskind}, \citenamefont {Swingle},\ and\ \citenamefont
  {Zhao}}]{Brown2016a}%
  \BibitemOpen
  \bibfield  {author} {\bibinfo {author} {\bibfnamefont {A.~R.}\ \bibnamefont
  {Brown}}, \bibinfo {author} {\bibfnamefont {D.~A.}\ \bibnamefont {Roberts}},
  \bibinfo {author} {\bibfnamefont {L.}~\bibnamefont {Susskind}}, \bibinfo
  {author} {\bibfnamefont {B.}~\bibnamefont {Swingle}},\ and\ \bibinfo {author}
  {\bibfnamefont {Y.}~\bibnamefont {Zhao}},\ }\href
  {https://doi.org/10.1103/PhysRevD.93.086006} {\bibfield  {journal} {\bibinfo
  {journal} {Phys. Rev. D}\ }\textbf {\bibinfo {volume} {93}},\ \bibinfo
  {pages} {086006} (\bibinfo {year} {2016}{\natexlab{b}})}\BibitemShut
  {NoStop}%
\bibitem [{\citenamefont {Brand{\~a}o}\ \emph {et~al.}(2021)\citenamefont
  {Brand{\~a}o}, \citenamefont {Chemissany}, \citenamefont {{Hunter-Jones}},
  \citenamefont {Kueng},\ and\ \citenamefont {Preskill}}]{Brandao2021}%
  \BibitemOpen
  \bibfield  {author} {\bibinfo {author} {\bibfnamefont {F.~G. S.~L.}\
  \bibnamefont {Brand{\~a}o}}, \bibinfo {author} {\bibfnamefont
  {W.}~\bibnamefont {Chemissany}}, \bibinfo {author} {\bibfnamefont
  {N.}~\bibnamefont {{Hunter-Jones}}}, \bibinfo {author} {\bibfnamefont
  {R.}~\bibnamefont {Kueng}},\ and\ \bibinfo {author} {\bibfnamefont
  {J.}~\bibnamefont {Preskill}},\ }\href
  {https://doi.org/10.1103/PRXQuantum.2.030316} {\bibfield  {journal} {\bibinfo
   {journal} {PRX Quantum}\ }\textbf {\bibinfo {volume} {2}},\ \bibinfo {pages}
  {030316} (\bibinfo {year} {2021})}\BibitemShut {NoStop}%
\bibitem [{\citenamefont {Haferkamp}\ \emph {et~al.}(2022)\citenamefont
  {Haferkamp}, \citenamefont {Faist}, \citenamefont {Kothakonda}, \citenamefont
  {Eisert},\ and\ \citenamefont {Yunger~Halpern}}]{Haferkamp2021a}%
  \BibitemOpen
  \bibfield  {author} {\bibinfo {author} {\bibfnamefont {J.}~\bibnamefont
  {Haferkamp}}, \bibinfo {author} {\bibfnamefont {P.}~\bibnamefont {Faist}},
  \bibinfo {author} {\bibfnamefont {N.~B.~T.}\ \bibnamefont {Kothakonda}},
  \bibinfo {author} {\bibfnamefont {J.}~\bibnamefont {Eisert}},\ and\ \bibinfo
  {author} {\bibfnamefont {N.}~\bibnamefont {Yunger~Halpern}},\ }\href
  {https://doi.org/10.1038/s41567-022-01539-6} {\bibfield  {journal} {\bibinfo
  {journal} {Nature Physics}\ }\textbf {\bibinfo {volume} {18}},\ \bibinfo
  {pages} {528} (\bibinfo {year} {2022})}\BibitemShut {NoStop}%
\bibitem [{\citenamefont {Weidinger}\ \emph {et~al.}(2018)\citenamefont
  {Weidinger}, \citenamefont {Gopalakrishnan},\ and\ \citenamefont
  {Knap}}]{weidingerSelfconsistentHartreeFockApproach2018}%
  \BibitemOpen
  \bibfield  {author} {\bibinfo {author} {\bibfnamefont {S.~A.}\ \bibnamefont
  {Weidinger}}, \bibinfo {author} {\bibfnamefont {S.}~\bibnamefont
  {Gopalakrishnan}},\ and\ \bibinfo {author} {\bibfnamefont {M.}~\bibnamefont
  {Knap}},\ }\href {https://doi.org/10.1103/PhysRevB.98.224205} {\bibfield
  {journal} {\bibinfo  {journal} {Physical Review B}\ }\textbf {\bibinfo
  {volume} {98}},\ \bibinfo {pages} {224205} (\bibinfo {year} {2018})},\
  \bibinfo {note} {publisher: American Physical Society}\BibitemShut {NoStop}%
\bibitem [{\citenamefont {De~Tomasi}\ \emph {et~al.}(2019)\citenamefont
  {De~Tomasi}, \citenamefont {Pollmann},\ and\ \citenamefont
  {Heyl}}]{detomasiEfficientlySolvingDynamics2019}%
  \BibitemOpen
  \bibfield  {author} {\bibinfo {author} {\bibfnamefont {G.}~\bibnamefont
  {De~Tomasi}}, \bibinfo {author} {\bibfnamefont {F.}~\bibnamefont
  {Pollmann}},\ and\ \bibinfo {author} {\bibfnamefont {M.}~\bibnamefont
  {Heyl}},\ }\href {https://doi.org/10.1103/PhysRevB.99.241114} {\bibfield
  {journal} {\bibinfo  {journal} {Physical Review B}\ }\textbf {\bibinfo
  {volume} {99}},\ \bibinfo {pages} {241114} (\bibinfo {year} {2019})},\
  \bibinfo {note} {publisher: American Physical Society}\BibitemShut {NoStop}%
\bibitem [{\citenamefont {Chandran}\ \emph
  {et~al.}(2015{\natexlab{b}})\citenamefont {Chandran}, \citenamefont
  {Carrasquilla}, \citenamefont {Kim}, \citenamefont {Abanin},\ and\
  \citenamefont {Vidal}}]{chandranSpectralTensorNetworks2015d}%
  \BibitemOpen
  \bibfield  {author} {\bibinfo {author} {\bibfnamefont {A.}~\bibnamefont
  {Chandran}}, \bibinfo {author} {\bibfnamefont {J.}~\bibnamefont
  {Carrasquilla}}, \bibinfo {author} {\bibfnamefont {I.~H.}\ \bibnamefont
  {Kim}}, \bibinfo {author} {\bibfnamefont {D.~A.}\ \bibnamefont {Abanin}},\
  and\ \bibinfo {author} {\bibfnamefont {G.}~\bibnamefont {Vidal}},\ }\href
  {https://doi.org/10.1103/PhysRevB.92.024201} {\bibfield  {journal} {\bibinfo
  {journal} {Physical Review B}\ }\textbf {\bibinfo {volume} {92}},\ \bibinfo
  {pages} {024201} (\bibinfo {year} {2015}{\natexlab{b}})}\BibitemShut
  {NoStop}%
\bibitem [{\citenamefont {Pollmann}\ \emph {et~al.}(2016)\citenamefont
  {Pollmann}, \citenamefont {Khemani}, \citenamefont {Cirac},\ and\
  \citenamefont {Sondhi}}]{pollmannEfficientVariationalDiagonalization2016}%
  \BibitemOpen
  \bibfield  {author} {\bibinfo {author} {\bibfnamefont {F.}~\bibnamefont
  {Pollmann}}, \bibinfo {author} {\bibfnamefont {V.}~\bibnamefont {Khemani}},
  \bibinfo {author} {\bibfnamefont {J.~I.}\ \bibnamefont {Cirac}},\ and\
  \bibinfo {author} {\bibfnamefont {S.~L.}\ \bibnamefont {Sondhi}},\ }\href
  {https://doi.org/10.1103/PhysRevB.94.041116} {\bibfield  {journal} {\bibinfo
  {journal} {Physical Review B}\ }\textbf {\bibinfo {volume} {94}},\ \bibinfo
  {pages} {041116} (\bibinfo {year} {2016})}\BibitemShut {NoStop}%
\bibitem [{\citenamefont {Wahl}\ \emph {et~al.}(2017)\citenamefont {Wahl},
  \citenamefont {Pal},\ and\ \citenamefont
  {Simon}}]{wahlEfficientRepresentationFully2017}%
  \BibitemOpen
  \bibfield  {author} {\bibinfo {author} {\bibfnamefont {T.~B.}\ \bibnamefont
  {Wahl}}, \bibinfo {author} {\bibfnamefont {A.}~\bibnamefont {Pal}},\ and\
  \bibinfo {author} {\bibfnamefont {S.~H.}\ \bibnamefont {Simon}},\ }\href
  {https://doi.org/10.1103/PhysRevX.7.021018} {\bibfield  {journal} {\bibinfo
  {journal} {Physical Review X}\ }\textbf {\bibinfo {volume} {7}},\ \bibinfo
  {pages} {021018} (\bibinfo {year} {2017})}\BibitemShut {NoStop}%
\bibitem [{fam()}]{families}%
  \BibitemOpen
  \href@noop {} {}\bibinfo {note} {To be precise, our definition is for a
  family of quasilocal unitaries $U$ with respect to the size $N$ of the
  system, otherwise all unitaries $U$ would be quasilocal given a sufficiently
  large, but still constant, $q$.}\BibitemShut {Stop}%
\bibitem [{\citenamefont {Kulshreshtha}\ \emph {et~al.}(2019)\citenamefont
  {Kulshreshtha}, \citenamefont {Pal}, \citenamefont {Wahl},\ and\
  \citenamefont {Simon}}]{kulshreshthaApproximatingObservablesEigenstates2019}%
  \BibitemOpen
  \bibfield  {author} {\bibinfo {author} {\bibfnamefont {A.~K.}\ \bibnamefont
  {Kulshreshtha}}, \bibinfo {author} {\bibfnamefont {A.}~\bibnamefont {Pal}},
  \bibinfo {author} {\bibfnamefont {T.~B.}\ \bibnamefont {Wahl}},\ and\
  \bibinfo {author} {\bibfnamefont {S.~H.}\ \bibnamefont {Simon}},\ }\href
  {https://doi.org/10.1103/PhysRevB.99.104201} {\bibfield  {journal} {\bibinfo
  {journal} {Physical Review B}\ }\textbf {\bibinfo {volume} {99}},\ \bibinfo
  {pages} {104201} (\bibinfo {year} {2019})}\BibitemShut {NoStop}%
\bibitem [{\citenamefont {Chertkov}\ \emph {et~al.}(2021)\citenamefont
  {Chertkov}, \citenamefont {Villalonga},\ and\ \citenamefont
  {Clark}}]{Chertkov2021}%
  \BibitemOpen
  \bibfield  {author} {\bibinfo {author} {\bibfnamefont {E.}~\bibnamefont
  {Chertkov}}, \bibinfo {author} {\bibfnamefont {B.}~\bibnamefont
  {Villalonga}},\ and\ \bibinfo {author} {\bibfnamefont {B.~K.}\ \bibnamefont
  {Clark}},\ }\href {https://doi.org/10.1103/PhysRevLett.126.180602} {\bibfield
   {journal} {\bibinfo  {journal} {Physical Review Letters}\ }\textbf {\bibinfo
  {volume} {126}},\ \bibinfo {pages} {180602} (\bibinfo {year}
  {2021})}\BibitemShut {NoStop}%
\bibitem [{SM()}]{SM}%
  \BibitemOpen
  \href@noop {} {}\bibinfo {note} {Refer to the Supplemental Material for a
  more detailed derivation of the main results, which includes
  Refs.~\cite{Osborne2006,Borwein_2007_Uniform}}\BibitemShut {NoStop}%
\bibitem [{\citenamefont {Arkhipov}(2015)}]{Arkhipov2015}%
  \BibitemOpen
  \bibfield  {author} {\bibinfo {author} {\bibfnamefont {A.}~\bibnamefont
  {Arkhipov}},\ }\href {https://doi.org/10.1103/PhysRevA.92.062326} {\bibfield
  {journal} {\bibinfo  {journal} {Phys. Rev. A}\ }\textbf {\bibinfo {volume}
  {92}},\ \bibinfo {pages} {062326} (\bibinfo {year} {2015})}\BibitemShut
  {NoStop}%
\bibitem [{\citenamefont {Bremner}\ \emph {et~al.}(2011)\citenamefont
  {Bremner}, \citenamefont {Jozsa},\ and\ \citenamefont
  {Shepherd}}]{Bremner2011}%
  \BibitemOpen
  \bibfield  {author} {\bibinfo {author} {\bibfnamefont {M.~J.}\ \bibnamefont
  {Bremner}}, \bibinfo {author} {\bibfnamefont {R.}~\bibnamefont {Jozsa}},\
  and\ \bibinfo {author} {\bibfnamefont {D.~J.}\ \bibnamefont {Shepherd}},\
  }\href {https://doi.org/10.1098/rspa.2010.0301} {\bibfield  {journal}
  {\bibinfo  {journal} {Proc. R. Soc. Math. Phys. Eng. Sci.}\ }\textbf
  {\bibinfo {volume} {467}},\ \bibinfo {pages} {459} (\bibinfo {year}
  {2011})}\BibitemShut {NoStop}%
\bibitem [{Loc()}]{LocalFields}%
  \BibitemOpen
  \href@noop {} {}\bibinfo {note} {We could choose $h_{i}$ such that this
  relationship holds exactly, i.e. without working modulo 4, but that would
  require on-site terms that are exponentially small in the system
  size.}\BibitemShut {Stop}%
\bibitem [{\citenamefont {Hangleiter}\ \emph {et~al.}(2018)\citenamefont
  {Hangleiter}, \citenamefont {{Bermejo-Vega}}, \citenamefont {Schwarz},\ and\
  \citenamefont {Eisert}}]{Hangleiter2018}%
  \BibitemOpen
  \bibfield  {author} {\bibinfo {author} {\bibfnamefont {D.}~\bibnamefont
  {Hangleiter}}, \bibinfo {author} {\bibfnamefont {J.}~\bibnamefont
  {{Bermejo-Vega}}}, \bibinfo {author} {\bibfnamefont {M.}~\bibnamefont
  {Schwarz}},\ and\ \bibinfo {author} {\bibfnamefont {J.}~\bibnamefont
  {Eisert}},\ }\href {https://doi.org/10.22331/q-2018-05-22-65} {\bibfield
  {journal} {\bibinfo  {journal} {Quantum}\ }\textbf {\bibinfo {volume} {2}},\
  \bibinfo {pages} {65} (\bibinfo {year} {2018})}\BibitemShut {NoStop}%
\bibitem [{\citenamefont {Bremner}\ \emph {et~al.}(2016)\citenamefont
  {Bremner}, \citenamefont {Montanaro},\ and\ \citenamefont
  {Shepherd}}]{Bremner2016}%
  \BibitemOpen
  \bibfield  {author} {\bibinfo {author} {\bibfnamefont {M.~J.}\ \bibnamefont
  {Bremner}}, \bibinfo {author} {\bibfnamefont {A.}~\bibnamefont {Montanaro}},\
  and\ \bibinfo {author} {\bibfnamefont {D.~J.}\ \bibnamefont {Shepherd}},\
  }\href {https://doi.org/10.1103/PhysRevLett.117.080501} {\bibfield  {journal}
  {\bibinfo  {journal} {Phys. Rev. Lett.}\ }\textbf {\bibinfo {volume} {117}},\
  \bibinfo {pages} {080501} (\bibinfo {year} {2016})}\BibitemShut {NoStop}%
\bibitem [{\citenamefont {Bouland}\ \emph {et~al.}(2019)\citenamefont
  {Bouland}, \citenamefont {Fefferman}, \citenamefont {Nirkhe},\ and\
  \citenamefont {Vazirani}}]{Bouland2019}%
  \BibitemOpen
  \bibfield  {author} {\bibinfo {author} {\bibfnamefont {A.}~\bibnamefont
  {Bouland}}, \bibinfo {author} {\bibfnamefont {B.}~\bibnamefont {Fefferman}},
  \bibinfo {author} {\bibfnamefont {C.}~\bibnamefont {Nirkhe}},\ and\ \bibinfo
  {author} {\bibfnamefont {U.}~\bibnamefont {Vazirani}},\ }\href
  {https://doi.org/10.1038/s41567-018-0318-2} {\bibfield  {journal} {\bibinfo
  {journal} {Nat. Phys.}\ }\textbf {\bibinfo {volume} {15}},\ \bibinfo {pages}
  {159} (\bibinfo {year} {2019})}\BibitemShut {NoStop}%
\bibitem [{\citenamefont {Kim}\ \emph {et~al.}(2014)\citenamefont {Kim},
  \citenamefont {Chandran},\ and\ \citenamefont {Abanin}}]{kim_local_2014}%
  \BibitemOpen
  \bibfield  {author} {\bibinfo {author} {\bibfnamefont {I.~H.}\ \bibnamefont
  {Kim}}, \bibinfo {author} {\bibfnamefont {A.}~\bibnamefont {Chandran}},\ and\
  \bibinfo {author} {\bibfnamefont {D.~A.}\ \bibnamefont {Abanin}},\ }\href
  {http://arxiv.org/abs/1412.3073} {\bibfield  {journal} {\bibinfo  {journal}
  {arXiv:1412.3073 [cond-mat, physics:quant-ph]}\ } (\bibinfo {year} {2014})},\
  \bibinfo {note} {arXiv: 1412.3073}\BibitemShut {NoStop}%
\bibitem [{\citenamefont {Nielsen}\ and\ \citenamefont
  {Chuang}(2011)}]{Nielsen2011}%
  \BibitemOpen
  \bibfield  {author} {\bibinfo {author} {\bibfnamefont {M.~A.}\ \bibnamefont
  {Nielsen}}\ and\ \bibinfo {author} {\bibfnamefont {I.~L.}\ \bibnamefont
  {Chuang}},\ }\href@noop {} {\emph {\bibinfo {title} {Quantum {{Computation}}
  and {{Quantum Information}} 10th {{Anniversary Edition}} 2011}}}\ (\bibinfo
  {publisher} {{Cambridge University Press}},\ \bibinfo {year}
  {2011})\BibitemShut {NoStop}%
\bibitem [{\citenamefont {Kliuchnikov}\ \emph {et~al.}(2015)\citenamefont
  {Kliuchnikov}, \citenamefont {Bocharov}, \citenamefont {Roetteler},\ and\
  \citenamefont {Yard}}]{Kliuchnikov2015}%
  \BibitemOpen
  \bibfield  {author} {\bibinfo {author} {\bibfnamefont {V.}~\bibnamefont
  {Kliuchnikov}}, \bibinfo {author} {\bibfnamefont {A.}~\bibnamefont
  {Bocharov}}, \bibinfo {author} {\bibfnamefont {M.}~\bibnamefont
  {Roetteler}},\ and\ \bibinfo {author} {\bibfnamefont {J.}~\bibnamefont
  {Yard}}\ }(\bibinfo {year} {2015})\ \Eprint
  {https://arxiv.org/abs/1510.03888} {arXiv:1510.03888 [quant-ph]} \BibitemShut
  {NoStop}%
\bibitem [{\citenamefont {De~Roeck}\ and\ \citenamefont
  {Huveneers}(2017)}]{DeRoeck2017Stability}%
  \BibitemOpen
  \bibfield  {author} {\bibinfo {author} {\bibfnamefont {W.}~\bibnamefont
  {De~Roeck}}\ and\ \bibinfo {author} {\bibfnamefont {F.~m.~c.}\ \bibnamefont
  {Huveneers}},\ }\href@noop {} {\bibfield  {journal} {\bibinfo  {journal}
  {Phys. Rev. B}\ }\textbf {\bibinfo {volume} {95}},\ \bibinfo {pages} {155129}
  (\bibinfo {year} {2017})}\BibitemShut {NoStop}%
\bibitem [{\citenamefont {Atia}\ and\ \citenamefont
  {Aharonov}(2017)}]{Atia2017}%
  \BibitemOpen
  \bibfield  {author} {\bibinfo {author} {\bibfnamefont {Y.}~\bibnamefont
  {Atia}}\ and\ \bibinfo {author} {\bibfnamefont {D.}~\bibnamefont
  {Aharonov}},\ }\href {https://doi.org/10.1038/s41467-017-01637-7} {\bibfield
  {journal} {\bibinfo  {journal} {Nat. Commun.}\ }\textbf {\bibinfo {volume}
  {8}},\ \bibinfo {pages} {1572} (\bibinfo {year} {2017})}\BibitemShut
  {NoStop}%
\bibitem [{\citenamefont {Borwein}\ and\ \citenamefont
  {Chan}(2007)}]{Borwein_2007_Uniform}%
  \BibitemOpen
  \bibfield  {author} {\bibinfo {author} {\bibfnamefont {J.~J.}\ \bibnamefont
  {Borwein}}\ and\ \bibinfo {author} {\bibfnamefont {O.-Y.}\ \bibnamefont
  {Chan}},\ }\href {https://doi.org/10.7153/mia-12-10} {\bibfield  {journal}
  {\bibinfo  {journal} {Mathematical Inequalities and Applications}\ }\textbf
  {\bibinfo {volume} {12}} (\bibinfo {year} {2007})}\BibitemShut {NoStop}%
\end{thebibliography}%

\newpage
\onecolumngrid
\let\oldaddcontentsline\addcontentsline
\renewcommand{\addcontentsline}[3]{}
\section*{Supplemental Material for: Simulation Complexity of Many-Body Localized Systems}
\let\addcontentsline\oldaddcontentsline
\setcounter{section}{0}
\setcounter{theorem}{0}
\setcounter{equation}{0}
\setcounter{lemma}{0}
\setcounter{definition}{0}
\setcounter{corollary}{0}
\renewcommand{\thetheorem}{S.\arabic{theorem}}
\renewcommand{\theequation}{S.\arabic{equation}}
\renewcommand{\thesection}{S.\Roman{section}}
\renewcommand{\thecorollary}{S.\arabic{corollary}}
\renewcommand{\thelemma}{S.\arabic{lemma}}

In this Supplemental Material we provide more details for the algorithm simulating MBL Hamiltonians evolved only for times at most logarithmic in the system size (\cref{sec:supp_log}), and we give mathematical proofs of \cref{eq:spn0,eqn:error,eq:rUrJ} deferred from the main text for clarity (\cref{sec:math}).

\section{Logarithmic Time Simulation}\label{sec:supp_log}
In this section, we give more details for a strong simulation algorithm for MBL Hamiltonians evolved for at most logarithmic times. 
As discussed in the main text, if the Hamiltonian $H$ is finite-range in the physical basis, Ref.~\cite{Osborne2006} provides an efficient representation of the propagator $e^{-iHt}$ for evolution time logarithmic in the system size $N$:
\begin{theorem}\label{thm:Osborne}[Ref.~\cite{Osborne2006}]
Assuming $H$ is finite-range in the physical basis, then one can construct an approximation $\tilde{U}$ to the propagator $U = e^{-iHt}$ such that $\norm{U-\tilde{U}} \leq \epsilon$ and $\tilde{U}$ may be computed with classical resources that are polynomial in $N$ and $1/\epsilon$ and exponential in $\abs{t}$. 
\end{theorem}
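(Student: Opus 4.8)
The plan is to reconstruct the result of Ref.~\cite{Osborne2006} via a Lieb-Robinson argument that controls the spatial spread of the dynamics and then translate that control into an efficient matrix-product-operator (MPO) representation of the propagator. Since $H$ is finite-range in the physical basis, it obeys a Lieb-Robinson bound: for local operators $A$ and $B$ whose supports are separated by distance $\ell$, one has $\norm{[e^{iHt}Ae^{-iHt},\, B]} \leq c\norm{A}\norm{B}e^{-(\ell - v\abs{t})/\xi_{\mathrm{LR}}}$ for constants $c, v, \xi_{\mathrm{LR}}$ fixed by the interaction range and strength. The physical content is a finite information-propagation velocity $v$: the Heisenberg-evolved version of any single-site operator is, up to error exponentially small in the truncation radius, supported only on the ``light cone'' of radius $\O{v\abs{t}}$ surrounding that site.

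First I would use this bound to show that $U = e^{-iHt}$ can be approximated in operator norm by a \emph{quasilocal} unitary whose effective interaction range is the light-cone radius $\ell_{*} = \O{v\abs{t} + \xi_{\mathrm{LR}}\log(N/\epsilon)}$. Concretely, one partitions the chain into blocks of width $\sim \ell_{*}$ and approximates $U$ by a product of block unitaries together with boundary corrections; the error incurred at each of the $\O{N/\ell_{*}}$ block boundaries is controlled by the exponential tail of the Lieb-Robinson bound, so choosing $\ell_{*}$ logarithmic in $N/\epsilon$ suffices to make the total norm error at most $\epsilon$.

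Next I would convert this approximate quasilocal unitary into an MPO and bound its bond dimension. The key observation is that an operator whose Schmidt rank across every bipartite cut is at most $\chi$ admits an MPO of bond dimension $\chi$; because the approximating unitary couples sites only within a window of width $\O{\ell_{*}}$, the operator Schmidt rank across any single cut is at most $2^{\O{\ell_{*}}} = e^{\O{v\abs{t}}}(N/\epsilon)^{\O{\xi_{\mathrm{LR}}}} \equiv \chi$, i.e.\ polynomial in $N$ and $1/\epsilon$ and exponential in $\abs{t}$. Storing the $N$ MPO tensors therefore costs $\O{N\chi^{2}}$, which has the same scaling, establishing the claimed resource bound.

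I expect the main obstacle to be the second step: cleanly decomposing $U$ into a product of block unitaries with a rigorous, additive-in-$N$ error bound. The subtlety is that truncating the Heisenberg evolution of individual terms does not immediately assemble into a norm bound on the full propagator, so one must either invoke a telescoping/Trotter-style argument in which each step's error is bounded by a commutator (and hence by the Lieb-Robinson tail), or appeal directly to an ``area law for operator entanglement'' bounding the operator Schmidt rank of $e^{-iHt}$ across a single cut. Either route requires care that the $\O{N}$ accumulated errors remain controlled by the single exponential tail, which is precisely where the logarithmic-in-$N$ choice of $\ell_{*}$ enters.
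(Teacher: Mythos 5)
The paper does not actually prove this statement---it imports it verbatim from Ref.~\cite{Osborne2006}---and your reconstruction follows essentially the same strategy as that cited proof: a Lieb-Robinson bound for finite-range 1D Hamiltonians, an approximate factorization of the propagator into quasilocal block unitaries of width $\O{v\abs{t} + \log(N/\epsilon)}$, and a matrix-product-operator representation whose bond dimension is polynomial in $N/\epsilon$ and exponential in $\abs{t}$. The obstacle you flag (assembling the local truncations into a single operator-norm bound on the full propagator) is exactly what Osborne resolves with an interaction-picture decomposition, writing the boundary coupling as a time-ordered exponential that Lieb-Robinson bounds confine to a region of width $\O{v\abs{t}+\log(1/\epsilon)}$, so your proposal is correct and matches the known argument.
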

We have that for some initial state $\ket{\phi}$, $\norm{U\ket{\phi}-\tilde{U}\ket{\phi}}_{2} = \norm{U-\tilde{U}} \leq \epsilon$. Thus, approximate simulation of $U\ket{\phi}$ can be solved by exactly simulating $\tilde{U}\ket{\phi}$. As constructed in Ref.~\cite{Osborne2006}, $\tilde{U}$ is described in the matrix product operator formalism, which means that we have an algorithm that solves the problem of strong simulation for evolution of a product state under $\tilde{U}$.
This is because products of local observables admit a trivial matrix product operator formulation (as there is no correlation between the operators, a product of local observables is a matrix product operator with zero bond dimension). 
Because multiplication between reasonably sized matrix product operators is efficient, it is possibly to efficiently evaluate $\avg{e^{it\tilde{H}} (\prod_{i}O_{i})e^{-it\tilde{H}}}$.
As described in the main text, this also implies a sampling algorithm from the approximate distribution generated by measuring the initial state evolved under $\tilde{H}$ for time $t$:
\begin{corollary}
Provided $H$ is finite range in the physical basis, \cref{prob:mblsampling} is easy for $t = \O{\log N}$. 
\end{corollary}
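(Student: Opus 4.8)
The plan is to treat this corollary as a direct consequence of \cref{thm:Osborne} combined with the single observation that $t = \O{\log N}$ is precisely the regime in which Osborne's construction becomes classically efficient. First I would invoke \cref{thm:Osborne} to obtain a matrix-product-operator (MPO) approximation $\tilde{U}$ to the propagator $e^{-iHt}$ satisfying $\norm{e^{-iHt} - \tilde{U}} \leq \epsilon$, whose construction costs resources polynomial in $N$ and $1/\epsilon$ but exponential in $\abs{t}$. The crucial move is then to substitute $t = \O{\log N}$: the factor exponential in $\abs{t}$ becomes $e^{\O{\log N}} = \mathrm{poly}(N)$, so the entire cost of producing $\tilde{U}$ collapses to $\mathrm{poly}(N, 1/\epsilon)$. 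This scaling observation is really the heart of the result.

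Next I would convert the operator-norm guarantee into a statement about output distributions. For the initial product state $\ket{\phi} = \ket{0\dots0}$ we have $\norm{e^{-iHt}\ket{\phi} - \tilde{U}\ket{\phi}}_{2} \leq \norm{e^{-iHt} - \tilde{U}} \leq \epsilon$, and, exactly as already used in \cref{sec:Truncation} (via Ref.~\cite{Arkhipov2015}), the total variation distance between the associated measurement distributions is bounded by this $2$-norm distance. Hence the distribution obtained from $\tilde{U}$ is $\epsilon$-close in TVD to $\mathcal{D}$, satisfying the requirement of \cref{prob:mblsampling}.

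It then remains to argue that the MPO $\tilde{U}$ actually delivers efficient strong simulation. Here I would note that $\ket{0\dots0}$ is a product state of trivial bond dimension, and each observable of the form $O = \sigma_{i}^{z}\prod_{j<i}P_{j}$ is a product of single-site operators and so also admits a trivial MPO representation. Because multiplication and contraction of MPOs of bounded bond dimension run in time polynomial in $N$ and the bond dimension, the quantity $\bra{\phi}\tilde{U}^{\dag}O\tilde{U}\ket{\phi}$---which is precisely the conditional probability $P(z_{i} \mid z_{i-1}\dots z_{1})$---is computable in $\mathrm{poly}(N)$ time. Evaluating all $N$ such conditionals constitutes strong simulation, which by Ref.~\cite{Terhal2002a} implies the desired sampling algorithm and thus solves \cref{prob:mblsampling}.

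The main obstacle is not any hard estimate but rather confirming that the error guarantee of \cref{thm:Osborne}, stated in operator norm, propagates cleanly into the TVD metric in which \cref{prob:mblsampling} is posed; this is dispatched by the norm inequalities already established in the main text. Everything else is bookkeeping of the $t = \O{\log N}$ scaling together with standard MPO arithmetic, so I would keep the proof short and emphasize the logarithmic evolution time as the essential ingredient that tames Osborne's exponential-in-$t$ cost.
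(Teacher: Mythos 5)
Your proposal is correct and follows essentially the same route as the paper: invoke Osborne's theorem (\cref{thm:Osborne}) to get an MPO approximation of the propagator whose exponential-in-$t$ cost becomes polynomial at $t = \O{\log N}$, pass from the operator-norm error to TVD closeness via the 2-norm bound, and use efficient MPO arithmetic on products of single-site observables to obtain strong simulation, hence sampling. The only (favorable) difference is that you correctly write the state-level bound as an inequality, $\norm{e^{-iHt}\ket{\phi}-\tilde{U}\ket{\phi}}_{2} \leq \norm{e^{-iHt}-\tilde{U}}$, where the paper loosely states it as an equality.
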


The assumption that $H$ is finite-range in the physical basis is a technical one, but one that is reasonable, as many physical systems that are candidates for MBL, such as the disordered, short-range Ising model, fulfill such restrictions. 
Note, however, that finite-range Hamiltonians can also describe thermalizing systems. Thus, this result importantly establishes that there is a regime in which (many classes of) MBL systems admit sampling algorithms, but it does not use any of the salient features of MBL in order to distinguish it from the thermalizing phase.


\section{Mathematical Details}\label{sec:math}
Here we will present mathematical details deferred from the main text for clarity. 
\cref{lem:LIOM_closeness} bounds the difference between the full and approximate LIOMs discussed in the main text. 
\cref{lem:sum_bound_full} places a bound on the sum $S_{p,n_{0}}$. 
\cref{lem:inc_gamma_bound} \cite{Borwein_2007_Uniform} provides an intermediate result regarding the incomplete Gamma function that is useful in proving the bound on $S_{p,n_{0}}$. 
\cref{lem:true_truncated_difference} applies \cref{lem:LIOM_closeness} and \cref{lem:sum_bound_full} in order to bound the operator norm of the difference between the full and truncated Hamiltonians.

\begin{lemma}\label{lem:LIOM_closeness}
	Let $H$ be an MBL Hamiltonian with localization length $\xi < 1/\log 2$. 
	Let $U$ be a quasilocal unitary with localization length $\xi$ as in \Cref{def:QLU} such that $U$ diagonalizes $H$, and let $\tilde{U}$ be $U$'s truncation to constituents of range less than or equal to $r_{U} = 2a \xi \log N$ for some constant $a > 1$. 
	Finally, let $\tau_{i}^{\alpha} = U\sigma_{i}^{\alpha}U^{\dag}$ and $\tilde{\tau}_{i}^{\alpha} = \tilde{U}\sigma_{i}^{\alpha}\tilde{U}^{\dag}$. For large enough system sizes $N$, it follows that
	\begin{equation}
		\norm{\tau_{i}^{z} - \tilde{\tau}_{i}^{z}} \leq 8\sqrt{q}Ne^{-\frac{r_{U}}{2\xi}},
	\end{equation}
	where $\norm{\cdot}$ is the operator norm. 
\end{lemma}
\begin{proof}
Let $U = U'\tilde{U}$, where
\begin{align}
	\tilde{U} &= \prod_{n=1}^{r_{U}}\prod_{j=1}^{n}\prod_{i=0}^{\lfloor (N-n)/n \rfloor}U_{in + j}^{(n)}, \\
	U' &= \prod_{n=r_{U}+1}^{N}\prod_{j=1}^{n}\prod_{i=0}^{\lfloor (N-n)/n \rfloor}U_{in + j}^{(n)}.
\end{align}
 
Write $U_{in + j}^{(n)} = \id + \Delta_{in + j}^{(n)}$, where we use $\id$ to denote the identity operator on the appropriate Hilbert space.
\Cref{def:QLU} tells us that $\norm{\Delta_{in+j}^{(n)}} < \sqrt{q}e^{-\frac{n-1}{2\xi}}$. 
Also write $\tilde{U} = \id + \tilde{\Delta}$ and, similarly, $U' = \id + \Delta'$ such that:
\begin{align}
	\norm{\Delta'} &= \norm{\prod_{n = r_{U}+1}^{N}\prod_{j=1}^{n}\prod_{i = 0}^{\lfloor (N-n)/n \rfloor}(\id + \Delta_{in + j}^{(n)})-\id} \label{EQN:DeltaGreaterDef}\\
	\norm{\tilde{\Delta}} &= \norm{\prod_{n = 1}^{r_{U}}\prod_{j=1}^{n}\prod_{i=0}^{\lfloor (N-n)/n \rfloor}(\id + \Delta_{in + j}^{(n)})-\id} \label{EQN:DeltaLessDef}.
\end{align}
We now have that
\begin{align}
	\norm{\tau_{i}^{z} - \tilde{\tau}_{i}^{z}} &= \norm{U'\tilde{\tau}_{i}^{z}(U')^\dag - \tilde{\tau}_{i}^{z}} \\
	&= \norm{\Delta'\tilde{\tau}_{i}^{z} + \tilde{\tau}_{i}^{z}(\Delta')^{\dag} + \Delta'\tilde{\tau}_{i}^{z}(\Delta')^{\dag}} \\
	&\leq \norm{\Delta'} + \norm{(\Delta')^{\dag}} + \norm{\Delta'}\norm{(\Delta')^{\dag}} \label{EQN:Tau-Tau_Delta_Bound}.
\end{align}
Define a multi-index parameter $\alpha = (i, j, n) = (k, n)$ where $k = in + j$ specifies the left-most site of an $n$-site unitary. 
We may then rewrite \cref{EQN:DeltaGreaterDef}:
\begin{equation}
	\Delta' = \left[\prod_{\alpha}(1+ \Delta_{\alpha})\right] -1 = \left[\sum_{S}\prod_{\alpha \in S}(\Delta_{\alpha})\right] -1 = \sum_{S \neq \emptyset}\prod_{\alpha \in S}\Delta_{\alpha},
\end{equation}
where $S$ is a subset of the possible $\alpha$ indices. 
The triangle inequality and submultiplicativity yield
\begin{equation}
	\norm{\Delta'} < 
	\left[\sum_{S}\prod_{\alpha \in S}\left(\sqrt{q}e^{-\frac{n_{\alpha}-1}{2\xi}}\right)\right] - 1 = \left[\sum_{S}\prod_{\alpha}\left(e^{-[\frac{(n_{\alpha}-1)}{2\xi}-\frac{\log q}{2}]\mathbb{I}(\alpha \in S)}\right)\right] - 1,
\end{equation}
where the indicator $\mathbb{I}(x)$ is 1 (0) if $x$ is true (false), and $n_{\alpha}$ is the size of the unitary indexed by $\alpha$.
To evaluate this, we switch the sum and product.
In particular, instead of using the indicator and summing over subsets $S$, we can instead view the sum as a sum over all $\alpha$ where $n$ can take either the value $0$ or $n_{\alpha}$.
Define $A$ to be the number of possible $\alpha$.
Then
\begin{align}
	\left[\sum_{S}\prod_{\alpha}\left(e^{-[\frac{(n_{\alpha}-1)}{2\xi}-\frac{\log q}{2}]\mathbb{I}(\alpha \in S)}\right)\right] - 1 &= \left[\sum_{\alpha_{1} = \{0, n_{1}-1\}} \cdots \sum_{\alpha_{A} = \{0, n_{A}-1\}} \left(e^{-\frac{\alpha_{1}}{2\xi} + \frac{\alpha_{1}\log q}{2(n_{1}-1)}}\cdots e^{-\frac{\alpha_{A}}{2\xi} + \frac{\alpha_{A}\log q}{2(n_{A}-1)}}\right)\right]- 1 \\
	&= \left[\prod_{\alpha}\sum_{n = \{0, n_{\alpha}-1\}}\left(e^{-\frac{n}{2\xi}+\frac{n\log q}{2(n_{\alpha}-1)}}\right)\right] - 1 \\
	&= \left[\prod_{\alpha}\left(1 + \sqrt{q}e^{-\frac{n_{\alpha}-1}{2\xi}}\right)\right] - 1.
\end{align}

We rewrite the infinite product as the exponential of an infinite sum:
\begin{equation}
	\prod_{\alpha}\left(1 + \sqrt{q}e^{-\frac{n_{\alpha}-1}{2\xi}}\right) - 1 = e^{\sum_{\alpha}\log\left(1 + \sqrt{q}e^{-\frac{n_{\alpha}-1}{2\xi}}\right)} - 1. \label{EQN:ExpSumLog}
\end{equation}
We now examine the sum:
\begin{equation}
	\sum_{\alpha}\log\left(1 + \sqrt{q}e^{-\frac{n_{\alpha}-1}{2\xi}}\right) = \sum_{n > r_{U}}\sum_{k}\log\left(1 + \sqrt{q}e^{-\frac{n-1}{2\xi}}\right).
\end{equation}
For any given $n$ (which labels the number of sites on which the block acts nontrivially), there are $N-n+1$ possible unitaries (the left-most site can be any besides the last $n-1$). 
We trivially upper bound this by $N$ such that
\begin{equation}
	\sum_{n > r_{U}}\sum_{k}\log\left(1 + \sqrt{q}e^{-\frac{n-1}{2\xi}}\right) < \sum_{n > r_{U}}N\sqrt{q}e^{-\frac{n-1}{2\xi}} = \frac{N\sqrt{q}}{1-e^{-\frac{1}{2\xi}}}e^{-\frac{r_{U}}{2\xi}}.
\end{equation}
Let $r_{U} = 2a \xi \log N$ for some $a > 1$.
Plugging back into \cref{EQN:ExpSumLog} yields
\begin{equation}
	\norm{\Delta'} < \exp\left(\frac{\sqrt{q}}{1-e^{-\frac{1}{2\xi}}}N^{1-a}\right) -1 < \sqrt{q}\frac{(1+o(1))}{1-\frac{1}{\sqrt{2}}}N^{1-a}
\end{equation}
for large enough $N$.
In the above, we have used $\xi < 1/\log 2$.
Plugging this result back into \cref{EQN:Tau-Tau_Delta_Bound} yields the result:
\begin{equation}
	\norm{\tau_{i}^{z} - \tilde{\tau}_{i}^{z}} \leq 8\sqrt{q} N^{1-a} = 8\sqrt{q}Ne^{-\frac{r_{U}}{2\xi}} 
\end{equation}
for large enough $N$.
\end{proof}

\begin{lemma}\label{lem:sum_bound_full}
Assuming $\xi < \frac{1}{\log 2}$, we may prove two bounds.
First
	\begin{equation}
		S_{p,n_{0}} = \sum_{n = n_{0}}^{\infty}\binom{n}{p}e^{-\frac{n}{\xi}} \leq C
		\begin{cases}
			e^{-\frac{n_{0}}{\xi}} & p = 0 \\
			pe^{-ap} & n_{0} < n_{*}, p > 0 \\
			\frac{n_{0}^{p+1}\sqrt{p}}{p!}e^{-\frac{n_{0}}{\xi}} & n_{0} \geq n_{*}, p>0
		\end{cases},
	\end{equation}
where $a \equiv \log (e^{1/\xi}-1)$, $n_{*} \equiv p\frac{e^{1/\xi}}{e^{1/\xi}-1} = p(1-e^{-1/\xi})^{-1}$, and $C = 10.8$. 

And, for $0 \leq x_{1} \leq x_{2} \leq n_{0}$:
\begin{equation}\label{eqn:Spn_trivial}
		\sum_{p = x_{1}}^{x_{2}}S_{p,n_{0}} = \sum_{p = x_{1}}^{x_{2}}\sum_{n = n_{0}}^{\infty}\binom{n}{p}e^{-\frac{n}{\xi}} \leq \frac{1}{1-e^{-\kappa}}e^{-\kappa n_{0}},
\end{equation}
where $\kappa =  \frac{1}{\xi} - \log 2$.

\end{lemma}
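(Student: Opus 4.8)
The plan is to treat the two asserted inequalities separately, since they need different techniques: the second is a short computation, while the first splits into three regimes governed by where the summand $f(n) \equiv \binom{n}{p} e^{-n/\xi}$ peaks.

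I would dispose of the second bound \cref{eqn:Spn_trivial} first. Exchanging the finite sum over $p$ with the sum over $n$ and bounding the truncated binomial sum by the full one, $\sum_{p=x_1}^{x_2}\binom{n}{p} \le \sum_{p=0}^{n}\binom{n}{p} = 2^n$ (valid since $x_2 \le n_0 \le n$), gives
\[ \sum_{p=x_1}^{x_2} S_{p,n_0} \le \sum_{n=n_0}^{\infty} 2^n e^{-n/\xi} = \sum_{n=n_0}^{\infty} e^{-\kappa n}, \qquad \kappa = \tfrac{1}{\xi} - \log 2. \]
The hypothesis $\xi < 1/\log 2$ is exactly what makes $\kappa > 0$, so this geometric series converges to $e^{-\kappa n_0}/(1-e^{-\kappa})$, which is the claim.

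For the first bound I would start with $p=0$, which is the geometric sum $\sum_{n\ge n_0} e^{-n/\xi} = e^{-n_0/\xi}/(1-e^{-1/\xi})$; since $\xi<1/\log2$ forces $e^{-1/\xi}<1/2$, the prefactor lies below $2 \le C$. For $p>0$ the controlling observation is the ratio $f(n+1)/f(n) = \frac{n+1}{n+1-p}e^{-1/\xi}$, which decreases in $n$ and crosses $1$ exactly at $n+1 = n_* = p(1-e^{-1/\xi})^{-1}$; hence $f$ increases up to $n_*$ and decreases thereafter. In the regime $n_0 < n_*$ I would simply extend the sum to all $n$: using the generating function $\sum_{n\ge 0}\binom{n}{p}x^n = x^p/(1-x)^{p+1}$ with $x = e^{-1/\xi}$ yields $S_{p,n_0} \le S_{p,0} = \frac{e^{1/\xi}}{e^{1/\xi}-1}\,(e^{1/\xi}-1)^{-p} = \frac{e^{1/\xi}}{e^{1/\xi}-1}\,e^{-ap}$, which is below $2e^{-ap} \le C\,p\,e^{-ap}$ for $p\ge 1$.

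The remaining regime $n_0 \ge n_*$ is where the work, and the main obstacle, lies. Here $f$ is decreasing from $n_0$ onward, so I would bound the tail by its leading term plus an integral, $S_{p,n_0} \le f(n_0) + \int_{n_0}^{\infty} f(x)\,dx$, replace $\binom{x}{p}$ by $x^p/p!$, and recognize the integral as an upper incomplete Gamma function, $\int_{n_0}^{\infty} \frac{x^p}{p!}e^{-x/\xi}\,dx = \frac{\xi^{p+1}}{p!}\Gamma(p+1, n_0/\xi)$. One checks that $n_0/\xi > p$ throughout this regime (using $\xi^{-1} > \log 2$ together with $n_0 \ge n_*$), so \cref{lem:inc_gamma_bound} applies and extracts the leading factor $(n_0/\xi)^p e^{-n_0/\xi}$; collecting the powers of $\xi$ against $p!$ and absorbing the subleading correction then gives the claimed $C\,\frac{n_0^{p+1}\sqrt{p}}{p!}e^{-n_0/\xi}$. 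I expect the delicate points to be verifying that the prefactor from \cref{lem:inc_gamma_bound} is controlled uniformly (this is what produces the loose $\sqrt{p}$) and that the leading term $f(n_0) \le \frac{n_0^{p}}{p!}e^{-n_0/\xi}$ is absorbed, so that a single constant $C=10.8$ simultaneously covers all three cases.
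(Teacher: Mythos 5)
Your proposal is correct, and it establishes the lemma by a genuinely different (and leaner) route in the two nontrivial cases. The paper handles both $p>0$ cases with Stirling's approximation: it writes $\binom{n}{p}e^{-n/\xi} \leq e^{g(n)}$, locates the maximum of $g$ at $n_{*}$ by calculus, and in the regime $n_{0}<n_{*}$ splits the tail into three pieces ($I_{<}$, $I_{<>}$, $I_{>}$) around that maximum, bounding each separately to get $C_{1}pe^{-ap}$ with $C_{1}<7.8$. Your Case 2 avoids all of this: monotonically extending the sum to $n_{0}=0$ and invoking the negative-binomial generating function $\sum_{n\geq 0}\binom{n}{p}x^{n} = x^{p}/(1-x)^{p+1}$ with $x=e^{-1/\xi}$ gives $S_{p,n_{0}} \leq \frac{e^{1/\xi}}{e^{1/\xi}-1}e^{-ap} \leq 2e^{-ap}$ in closed form, which is in fact \emph{stronger} than the stated bound (no factor of $p$ is needed). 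Your Case 3 has the same skeleton as the paper's (leading term at $n_{0}$, plus an integral tail controlled by \cref{lem:inc_gamma_bound}), but where the paper applies Stirling and then inverts it to recover a $\binom{n}{p}$-type bound, you use the elementary inequality $\binom{n}{p}\leq n^{p}/p!$ from the start; the $\sqrt{p}$ in the lemma statement then enters only as slack ($\sqrt{p}\geq 1$) rather than as a Stirling artifact, and your constant ($1 + \frac{1}{2\log 2 - 1} < 3.6$) beats the paper's $10.8$. The side conditions you flag do check out: $n_{*} > p\xi$ follows from $x > 1-e^{-x}$, so $n_{0}/\xi > p$ and \cref{lem:inc_gamma_bound} applies, with $n_{0}/\xi - p \geq p(2\log 2 - 1)$ bounding the prefactor uniformly. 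Your treatment of the $p=0$ case and of \cref{eqn:Spn_trivial} coincides with the paper's. What the paper's heavier Laplace-style machinery buys is a single uniform analysis of the summand's peak that works whether or not a closed form exists; what yours buys is brevity, elementarity, and sharper constants.
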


\begin{proof}
The proof of the second bound is straightforward. 
We simply upper bound the sum over $p$ of $\binom{n}{p}$ as $2^{n}$. 
We then have that 
\begin{equation}
	\sum_{p = x_{1}}^{x_{2}}S_{p,n_{0}} \leq \sum_{n = n_{0}}^{\infty} e^{n(-1/\xi + \log 2)},
\end{equation}
from which the result follows from exactly summing the geometric series, which converges as long as $\xi < \frac{1}{\log 2}$.
We now move on to the more complicated case that retains the $p$-dependence.

\emph{Case 1 ($p = 0$):}
The $p=0$ case is a straightforward geometric series and the constant out front can be chosen to be anything greater than $\frac{1}{1-e^{-1/\xi}} < 2$ (as $\xi < \frac{1}{\log 2}$). 

\emph{Case 2 ($n_{0} < n_{*}$):}
We begin with Stirling's Approximation, which says that:
\begin{equation}
	\sqrt{\frac{2\pi}{e^{4}}}\sqrt{\frac{n}{p(n-p)}}\frac{n^{n}}{p^{p}(n-p)^{n-p}}\leq\binom{n}{p}\leq\frac{e}{2\pi}\sqrt{\frac{n}{p(n-p)}}\frac{n^{n}}{p^{p}(n-p)^{n-p}}. 
\end{equation}
Applying the upper bound we see that
\begin{equation}
	\sum_{n = n_{0}}^{\infty}\binom{n}{p}e^{-\frac{n}{\xi}} \leq \sum_{n = n_{0}}^{\infty}\frac{e}{2\pi}\sqrt{\frac{n}{p(n-p)}}\frac{n^{n}}{p^{p}(n-p)^{n-p}}e^{-\frac{n}{\xi}}.
\end{equation}
We now note that for $n \geq p+1 > 1$, $\sqrt{\frac{n}{p(n-p)}} \leq \sqrt{2}$ such that $\frac{e}{2\pi}\sqrt{\frac{n}{p(n-p)}} \leq 1$. 
Then, for $n_{0} \geq p+1$,
\begin{equation}\label{eqn:Sp_exp_bound}
	S_{p,n_{0}} \leq \sum_{n = n_{0}}^{\infty}\frac{n^{n}}{p^{p}(n-p)^{n-p}}e^{-\frac{n}{\xi}} = \sum_{n = n_{0}}^{\infty}e^{-\frac{n}{\xi} + n\log n - (n-p)\log(n-p) - p\log p}\equiv \sum_{n = n_{0}}^{\infty}e^{g(n)}. 
\end{equation}
We can eliminate the $n \geq p+1$ assumption by realizing that the final bound in \cref{eqn:Sp_exp_bound} still holds trivially when $n = p$, as the logarithmic terms in $g(n)$ vanish.
Thus, \cref{eqn:Sp_exp_bound} is valid for all pairs $n_{0} \geq p$, which we assume in order to make the combinatorial factor $\binom{n}{p}$ well-defined.

Maximizing the summand means maximizing $g(n)$, so we calculate:
\begin{align}
	\frac{\partial g}{\partial n} &= -\frac{1}{\xi} + \log n - \log (n-p), \\
	\frac{\partial^{2} g}{\partial n^{2}} &= \frac{1}{n} - \frac{1}{n-p} .
\end{align}
It is straightforward to calculate 
\begin{equation}
	g'(n)
	\begin{cases}
		= 0  &n = n_{*} = p\frac{e^{1/\xi}}{e^{1/\xi}-1} \\
		> 0  &n < n_{*} \\
		< 0  &n > n_{*} \\
	\end{cases}.
\end{equation} 
Furthermore, it is also straightforward to verify $g''(n) < 0$ for all $n > p$.
Thus, we see that $g(n)$, and hence $e^{g(n)}$, has a single maximum on $[n_{0}, \infty)$; it is at $n_{*}$ for $n_{0} < n_{*}$ and $n_{0}$ for $n_{0} \geq n_{*}$. 
Additionally, it will be useful to calculate that $g(n_{*}) = -ap$, where $a \equiv \log(e^{1/\xi}-1)$. 

We now bound the final sum in \cref{eqn:Sp_exp_bound} with an integral using a Riemann approximation. 
In particular, let $n_{*}^{-} = \floor{n_{*}}$ and $n_{*}^{+} = n_{*}^{-} + 1$. 
Then
\begin{align}
	\sum_{n = n_{0}}^{\infty} e^{g(n)} &= e^{g(n_{*}^{-})} +e^{g(n_{*}^{+})} + \sum_{n=n_{0}}^{n_{*}^{-}-1} e^{g(n)} + \sum_{n=n_{*}^{+}+1}^{\infty} e^{g(n)} \\
	&\leq 2e^{g(n_{*})} + \int_{n_{0}}^{n_{*}^{-}}e^{g(n)} dn + \int_{n_{*}^{+}}^{\infty}e^{g(n)} dn  \\
	&\leq 2e^{-ap} + \int_{n_{0}}^{n_{*}}e^{g(n)}dn + \int_{n_{*}}^{\infty}e^{g(n)}dn \\
	&=2e^{-ap} + \int_{n_{0}}^{n_{*}}e^{g(n)}dn + \int_{n_{*}}^{2n_{*}}e^{g(n)}dn + \int_{2n_{*}}^{\infty}e^{g(n)}dn \\
	&\equiv 2e^{-ap} + I_{<} + I_{<>} + I_{>} \label{eqn:e^g(n)_bound_separated},
\end{align}

Consider first $I_{<}$. 
There we can start by using that $g(n_{*})$ is maximal to make the trivial bound:
\begin{equation}
	I_{<} \leq (n_{*}-n_{0})e^{g(n_{*})} \leq pe^{-a(p+1)},
\end{equation}
where we have used that $n_{*}-n_{0} \leq n_{*}-p = pe^{-a}$.
Similarly, for $I_{<>}$, we may say that
\begin{equation}
	I_{<>} \leq n_{*} e^{g(n_{*})} \leq 2pe^{-ap},
\end{equation}
where we have used that $p < n_{*} < 2p$ because $ 1/\xi > \log 2$.

To bound $I_{>}$, we first invert the Stirling approximation from earlier and write:
\begin{equation}\label{eqn:stirling_invert}
	e^{g(n)} = e^{-\frac{n}{\xi}}\frac{n^{n}}{p^{p}(n-p)^{(n-p)}} \leq \frac{e^{2}}{\sqrt{2\pi}}\sqrt{\frac{p(n-p)}{n}}\binom{n}{p}e^{-\frac{n}{\xi}} \leq 3\sqrt{p}\binom{n}{p}e^{-\frac{n}{\xi}} \leq 3\sqrt{p}\frac{n^{p}}{p!}e^{-\frac{n}{\xi}}.
\end{equation}
We can thus bound
\begin{equation}
	I_{>} \leq 3\frac{\sqrt{p}}{p!}\int_{2n_{*}}^{\infty}e^{-\frac{n}{\xi}}n^{p}dn.
\end{equation}
Substituting $u = \frac{n}{\xi}$ and defining $u_{*} =  \frac{n_{*}}{\xi}$ yield
\begin{equation}
	I_{>} \leq 3\frac{\xi^{p+1}\sqrt{p}}{p!}\int_{2u_{*}}^{\infty}e^{-u}u^{p}du = 3\frac{\xi^{p+1}\sqrt{p}}{\Gamma(p+1)}\Gamma(p+1,2u_{*}),
\end{equation}
where $\Gamma(a)$ and $\Gamma(a,z)$ are the standard Gamma and Incomplete Gamma functions, respectively. 
We can bound the Incomplete Gamma Function using Lemma \ref{lem:inc_gamma_bound} provided $2u_{*} > p$, i.e. $2\frac{e^{1/\xi}}{e^{1/\xi}-1} > \xi$.
We can actually do better and show that $u_{*} > p$. 
Defining $x = 1/\xi$, we want to show $xe^{x} - e^{x} +1 > 0$ for $x \in (0, \infty)$.
At $x = 0$, the LHS is 0.
Taking a derivative of the LHS with respect to $x$ yields $xe^{x} > 0$ for $x \in (0, \infty)$.
Thus, the LHS is $0$ at $x=0$ and increasing, which means the inequality holds. 
With that in mind, we apply Lemma \ref{lem:inc_gamma_bound}:
\begin{equation}
	I_{>} \leq 3\frac{\xi^{p+1}\sqrt{p}}{p!}\int_{2u_{*}}^{\infty}e^{-u}u^{p}du \leq 3\frac{\xi^{p+1}\sqrt{p}}{p!}\frac{(2u_{*})^{p+1}e^{-2u_{*}}}{2u_{*}-p} = 3 (2^{p+1})\frac{n_{*}^{p+1}e^{-\frac{2n_{*}}{\xi}}}{\sqrt{p}p!}\frac{1}{2\frac{1}{\xi}(\frac{e^{1/\xi}}{e^{1/\xi}-1})-1}.
\end{equation}
Note that
\begin{multline}\label{eqn:bound_I>}
2^{p+1}\frac{n_{*}^{p+1}e^{-\frac{2n_{*}}{\xi}}}{\sqrt{p}p!} = \frac{(2p)^{p+1}}{\sqrt{p}p!}\left(\frac{e^{1/\xi}}{e^{1/\xi}-1}\right)^{p+1}e^{-2\frac{p}{\xi}\frac{e^{1/\xi}}{e^{1/\xi}-1}} \leq \frac{2}{\sqrt{2\pi}}e^{p(1+\log 2)}e^{-a(p+1)}e^{\frac{p + 1}{\xi}}e^{-2\frac{p}{\xi}\frac{e^{1/\xi}}{e^{1/\xi}-1}}\\
 \leq \sqrt{\frac{2}{\pi}}e^{-ap} \underbrace{e^{-a + p(1+\log 2) + \frac{p+1}{\xi} - \frac{2p}{\xi}\frac{e^{1/\xi}}{e^{1/\xi} - 1}}}_{\leq e^{\log 2}} \leq \sqrt{\frac{8}{\pi}}e^{-ap}.
\end{multline}
The last bound is rather involved, so we will explain the steps carefully.
We want to show that 
\begin{align}
	-a + p(1+\log 2) + \frac{p+1}{\xi} - \frac{2p}{\xi}\frac{e^{1/\xi}}{e^{1/\xi}- 1} = \underbrace{p(1+\log 2) + \frac{p}{\xi} - \frac{2p}{\xi}\frac{e^{1/\xi}}{e^{1/\xi} - 1}}_{\text{(A)}} + \underbrace{\frac{1}{\xi} - a}_{\text{(B)}} < \log 2.
\end{align}
We can show (A) $< 0$ using a strategy similar to when we proved that our bound on the incomplete gamma function was valid.
In particular, first note that we can effectively cancel $\frac{p}{\xi}$ with one factor of $\frac{p}{\xi}\frac{e^{1/\xi}}{e^{1/\xi - 1}}$ given that $\xi < \log 2$.
We then want to show that $p(1 + \log 2) - px \frac{e^{x}}{e^{x}-1} < 0$, where, again, $x = \frac{1}{\xi}$. 
Equivalently, we want to show that $xe^{x}-(1+\log 2)e^{x} + (1+\log 2) > 0$.
Again, the LHS is 0 at $x = 0$. 
And, again, taking a derivative of the LHS gives us $xe^{x} + e^{x} - (1+\log 2)e^{x} = xe^{x} - \log 2 e^{x}$, which is greater than 0 as long as $x > \log 2$, or $\xi < \frac{1}{\log 2}$. 
We then want to bound (B), and this is done by noting that in the limit that $\xi$ is very small, then $a = \log(e^{1/\xi}-1) \sim 1/\xi$ such that (B) $\sim 0$.
In fact, the maximum of (B) is simply $\log 2$, which occurs for $\xi = \frac{1}{\log 2}$. 
With all of that handled, we can then say that the final bound in \cref{eqn:bound_I>} is exponentially decreasing with $p$ only if $a >0$, which corresponds to $\xi < \frac{1}{\log 2}$ or $1/\xi > \log 2$.

We need to combine the bounds on all of the components of \cref{eqn:e^g(n)_bound_separated}:
\begin{align}
	2e^{-ap} + I_{<} + I_{<>} + I_{>} &\leq \left(2 + pe^{-a} + 2p + 3\sqrt{\frac{8}{\pi}}\frac{1}{2\frac{1}{\xi}(\frac{e^{1/\xi}}{e^{1/\xi}-1})-1} \right)e^{-ap} \\
	&\leq \left(\frac{2}{p} + e^{-a} + 2 + \frac{3}{p}\sqrt{\frac{8}{\pi}}\frac{1}{4\log2 - 1}\right)pe^{-ap} \\
	&\leq C_{1}pe^{-ap},
\end{align}
where we have used the fact that $1/\xi > \log 2$ and defined
\begin{equation}
	C_{1} = 5 + 3\sqrt{\frac{8}{\pi}}\frac{1}{4\log2 - 1} < 7.8.
\end{equation}
\emph{Case 3 ($n_{0} \geq n_{*}$):} For sufficiently small $\xi$, we have $n_{*}\sim p$. Assuming $n_{0} \geq p+1$, this means that $n_{0} > n_{*}$.
However, given that situation, we can use that $g(n)$ is decreasing after $n_{0}$ to go immediately from \cref{eqn:Sp_exp_bound} to
\begin{equation}\label{eqn:nmax_less_n0_plus_extra_term}
	\sum_{n = n_{0}}^{\infty} e^{g(n)} \leq e^{g(n_{0})} + \int_{n_{0}}^{\infty}e^{g(n)}dn.
\end{equation}

In comparison with the case where $n_{*} < n_{0}$, the integral $I_{<}$ effectively does not exist here, and the bound in $I_{>}$ comes from simply replacing $2n_{*}$ with $n_{0}$ (which is now the maximal contribution) and adding on the extra term in \cref{eqn:nmax_less_n0_plus_extra_term}.
First, using steps nearly identical to those above (noting in particular that \cref{lem:inc_gamma_bound} is valid because $n_{0} > n_{*} > p\xi$ by the earlier proof), we can bound 
\begin{equation}
	I_{>} \leq \frac{3}{2\log 2-1}\frac{n_{0}^{p+1}e^{-\frac{n_{0}}{\xi}}}{\sqrt{p}p!}.
\end{equation}
Then, the contribution from $e^{g(n_{0})}$ may be bounded by inverting Stirling's approximation as in \cref{eqn:stirling_invert}:
\begin{equation}
	e^{g(n_{0})} \leq 3\frac{n_{0}^{p}\sqrt{p}}{p!}e^{-\frac{n_{0}}{\xi}}.
\end{equation}
Combining the two yields 
\begin{align}
	e^{g(n_{0})}  + I_{>} &\leq \left(\frac{3}{2\log 2-1} + 3\right)  \frac{n_{0}^{p+1}\sqrt{p}}{p!}e^{-\frac{n_{0}}{\xi}} \\
	&=C_{2}\frac{n_{0}^{p+1}\sqrt{p}}{p!}e^{-\frac{n_{0}}{\xi}},
\end{align}
where 
\begin{equation}
	C_{2} = \left(\frac{3}{2\log 2-1} + 3\right)  < 10.8.
\end{equation}
\end{proof}
\begin{lemma}[\cite{Borwein_2007_Uniform}]\label{lem:inc_gamma_bound}
	Let $\Gamma(a,z)$ be the Incomplete Gamma Function defined in the standard way:
	\begin{equation}
	 	\Gamma(a,z) = \int_{z}^{\infty}e^{-x}x^{a-1}dx.
	 \end{equation}
	Let $z \in \mathbb{R} > (a-1)$.
	Then
	\begin{equation}
		\Gamma(a,z) \leq \frac{z^{a}e^{-z}}{z-(a-1)}.
	\end{equation}
\end{lemma}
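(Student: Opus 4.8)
The plan is to reduce the integral to an elementary exponential one by a shift of the integration variable followed by a single convexity inequality. First I would substitute $x = z + t$, so that the lower limit becomes $0$ and the integrand factorizes as
\begin{equation}
\Gamma(a,z) = \int_0^\infty (z+t)^{a-1} e^{-(z+t)}\,dt = z^{a-1} e^{-z} \int_0^\infty \left(1 + \frac{t}{z}\right)^{a-1} e^{-t}\,dt.
\end{equation}
This isolates the expected prefactor $z^{a-1}e^{-z}$ and leaves a dimensionless integral that I only need to bound by a constant depending on $a$ and $z$.

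Next I would control the polynomial factor using the elementary inequality $1 + u \le e^{u}$, valid for all real $u$. Raising both sides to the power $a-1$ preserves the direction of the inequality precisely when $a - 1 \ge 0$, giving $\left(1 + t/z\right)^{a-1} \le e^{(a-1)t/z}$ for $t \ge 0$. Substituting this bound,
\begin{equation}
\Gamma(a,z) \le z^{a-1} e^{-z} \int_0^\infty e^{-t\left(1 - \frac{a-1}{z}\right)}\,dt.
\end{equation}
The remaining integral is a pure exponential; it converges exactly when the rate $1 - (a-1)/z$ is positive, i.e. when $z > a - 1$, which is the hypothesis, and in that case evaluates to $z/(z - (a-1))$. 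Multiplying through yields
\begin{equation}
\Gamma(a,z) \le z^{a-1} e^{-z} \cdot \frac{z}{z - (a-1)} = \frac{z^{a} e^{-z}}{z - (a-1)},
\end{equation}
which is the claim.

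An equivalent route I would keep in reserve rewrites the integrand as a total derivative: since $\frac{d}{dx}\left(-x^{a-1}e^{-x}\right) = (x - (a-1))\,x^{a-2}e^{-x}$, one has $x^{a-1}e^{-x} = \frac{x}{x-(a-1)}\,\frac{d}{dx}\left(-x^{a-1}e^{-x}\right)$, and for $a \ge 1$ the factor $x/(x-(a-1))$ is decreasing on $[z,\infty)$, hence bounded by its value $z/(z-(a-1))$ at the left endpoint; integrating the exact derivative over $[z,\infty)$ then contributes $z^{a-1}e^{-z}$ and reproduces the same bound.

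I do not expect a serious obstacle, since both arguments are short. The one point that genuinely requires care is the sign of $a - 1$: the step $\left(1+t/z\right)^{a-1} \le e^{(a-1)t/z}$ (equivalently the monotonicity of $x/(x-(a-1))$) only points in the needed direction for $a \ge 1$, and in fact the stated bound is \emph{false} for $a < 1$, so I would emphasize that the lemma is to be applied only in that regime. This is exactly the case of interest, since it is invoked with $a = p+1 \ge 1$ in \cref{lem:sum_bound_full}. The convergence condition of the final integral reproduces the hypothesis $z > a-1$ automatically, so no extra assumptions are needed.
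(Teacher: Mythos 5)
Your proof is essentially the paper's own: the paper substitutes $s = x/z - 1$ (your $t = zs$ up to rescaling), isolates $z^{a}e^{-z}$, applies the same inequality $1+s \le e^{s}$, and evaluates the resulting exponential integral, with convergence giving the hypothesis $z > a-1$. Your added caveat that the step $(1+u)^{a-1}\le e^{(a-1)u}$ requires $a \ge 1$ --- and that the stated bound genuinely fails for $a<1$ (e.g.\ at $a=0$ one has $\Gamma(0,z) = E_1(z) > e^{-z}/(z+1)$) --- is a correct refinement that the paper leaves implicit, and it is harmless here since the lemma is only ever invoked with $a = p+1 \ge 2$.
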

\begin{proof}
	Make the substitution $s = \frac{x}{z}-1$.
	Then
	\begin{equation}
		\Gamma(a,z) = \int_{0}^{\infty} e^{-(s+1)z}z^{a}(1+s)^{a-1}ds = z^{a}e^{-z}\int_{0}^{\infty}e^{-sz}(1+s)^{a-1}ds.
	\end{equation}
	From here, $(1+s)\leq e^{s}$ implies that
	\begin{equation}
		\Gamma(a,z) \leq z^{a}e^{-z}\int_{0}^{\infty}e^{-sz}e^{(a-1)s}ds = \frac{z^{a}e^{-z}}{-z+a-1} e^{-(z-(a-1))s}\bigg|_{s=0}^{\infty} = \frac{z^{a}e^{-z}}{z-(a-1)},
	\end{equation}
	as long as $z > a-1$ so that the upper limit actually vanishes. 
\end{proof}

\begin{lemma}\label{lem:true_truncated_difference}
	The difference between the truncated and true Hamiltonian obeys 
	\begin{align}
	 \norm{H -\tilde{H}} \leq C_{U}N^{2}e^{-\frac{r_{U}}{2\xi}} + C_{J}Nr_{J}e^{-kr_{J}}.
	\end{align}
\end{lemma}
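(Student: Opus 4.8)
The plan is to bound separately the two contributions to $\norm{\Delta H}$ produced by the triangle inequality \cref{eqn:hamiltonian_difference_norm_split}: the coupling-truncation error $\sum_{I}\abs{J_{I}-\tilde{J}_{I}}$ and the LIOM-truncation error $\sum_{I}\abs{\tilde{J}_{I}}\norm{\tau_{I}^{z}-\tilde{\tau}_{I}^{z}}$, feeding the decay estimates of \cref{lem:sum_bound_full} and \cref{lem:LIOM_closeness} into each in turn. The organizing device is to label every multi-index $I=(i_{1}<\dots<i_{p})$ by its particle number $p$ and its range $m\equiv i_{p}-i_{1}$: for fixed $p$ and $m$ there are at most $N$ choices of the leftmost site and at most $\binom{m}{p}$ ways of selecting the participating sites within the span, so every sum over $I$ collapses onto the sums $S_{p,n_{0}}=\sum_{n\geq n_{0}}\binom{n}{p}e^{-n/\xi}$ controlled by \cref{lem:sum_bound_full}.

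First I would treat the coupling term. The truncation rule \cref{eq:trunc_J} sets $J_{I}-\tilde{J}_{I}=0$ unless the range obeys $m\geq r_{J}$, so combining $\abs{J_{I}}\leq e^{-m/\xi}$ with the counting above gives
\begin{equation}
\sum_{I}\abs{J_{I}-\tilde{J}_{I}} \leq N\sum_{p}S_{p,r_{J}} \leq \frac{N}{1-e^{-\kappa}}e^{-\kappa r_{J}},
\end{equation}
where the final inequality is the geometric estimate \cref{eqn:Spn_trivial} with $\kappa=1/\xi-\log 2$. Crucially $\kappa>0$ exactly because $\xi<1/\log 2$. Since $e^{-\kappa r_{J}}\leq r_{J}e^{-\kappa r_{J}}$, this is dominated by $C_{J}Nr_{J}e^{-kr_{J}}$ with $k=\kappa$, reproducing the first term of the claim.

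Next I would treat the LIOM term. The key reduction is a telescoping (hybrid) argument: because each $\tau_{i}^{z}$ and $\tilde{\tau}_{i}^{z}$ is a unitary conjugate of a Pauli and therefore has operator norm $1$, the difference of the $p$-fold products is controlled by the single-site differences already bounded in \cref{lem:LIOM_closeness},
\begin{equation}
\norm{\tau_{I}^{z}-\tilde{\tau}_{I}^{z}} \leq \sum_{k=1}^{p}\norm{\tau_{i_{k}}^{z}-\tilde{\tau}_{i_{k}}^{z}} \leq 8\sqrt{q}\,pN e^{-r_{U}/(2\xi)}.
\end{equation}
Using $\abs{\tilde{J}_{I}}\leq\abs{J_{I}}\leq e^{-m/\xi}$ and summing over $I$ then gives
\begin{equation}
\sum_{I}\abs{\tilde{J}_{I}}\norm{\tau_{I}^{z}-\tilde{\tau}_{I}^{z}} \leq 8\sqrt{q}\,N^{2}e^{-r_{U}/(2\xi)}\sum_{p}p\,S_{p,p},
\end{equation}
and because $p<n_{*}$ for every $p\geq 1$, Case 2 of \cref{lem:sum_bound_full} gives $S_{p,p}\leq Cpe^{-ap}$, so that $\sum_{p}p\,S_{p,p}\leq C\sum_{p}p^{2}e^{-ap}$ is an $\O{1}$ constant provided $a=\log(e^{1/\xi}-1)>0$, again equivalent to $\xi<1/\log 2$. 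This yields the second term $C_{U}N^{2}e^{-r_{U}/(2\xi)}$, and adding the two contributions completes the argument.

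The hard part, I expect, will be the careful bookkeeping of the $p$-dependence rather than any single estimate: the telescoping step costs a factor of $p$, and the surviving couplings contribute their own $p$-dependence through $S_{p,\cdot}$, so one must confirm that the resulting sums over $p$ converge to $N$-independent constants rather than reintroducing a growing factor. This convergence is precisely what the localization hypothesis $\xi<1/\log 2$ secures, through the positivity of both $\kappa$ and $a$ in \cref{lem:sum_bound_full}; heuristically, the binomial entropy $2^{m}$ from counting multi-indices must be beaten by the exponential decay $e^{-m/\xi}$, and this requires the localization length to be short.
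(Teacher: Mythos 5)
Your proposal is correct in substance and follows essentially the same route as the paper's proof: the same triangle-inequality split \cref{eqn:hamiltonian_difference_norm_split}, the same reduction of multi-index sums to $S_{p,n_0}$ by organizing terms according to leftmost site, particle number $p$, and range $m$, the same telescoping argument feeding \cref{lem:LIOM_closeness} into the LIOM term, and the same invocation of Case 2 of \cref{lem:sum_bound_full} to obtain $N$-independent constants. Two remarks. First, a bookkeeping slip: a $p$-body term of range $m$ has both endpoint sites fixed, so the number of configurations per leftmost site is $\binom{m-1}{p-2}$, not $\binom{m}{p}$ (your count even vanishes for $p=m+1$, where exactly one term -- the consecutive block -- exists). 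Summed over $p$ this is harmless for the coupling term, since $\sum_p \binom{m-1}{p-2} = 2^{m-1} \le 2^m$, but in the LIOM term the telescoping weight $p$ makes the correct sum $e^{-1/\xi}\sum_{p\ge 0}(p+2)S_{p,p}$ (as in the paper) rather than your $\sum_{p}p\,S_{p,p}$, and for small $\xi$ the latter is \emph{not} an upper bound on the former (the dominant $p=0,1$ contributions are missing); since either sum converges to an $\O{1}$ constant once $a>0$, the final bound is unaffected, but your displayed intermediate inequality is false as written. Second, your treatment of the coupling term is marginally cleaner than the paper's: bounding all $p$ at once via the $2^n$ estimate underlying \cref{eqn:Spn_trivial} (whose stated restriction $x_2 \le n_0$ is immaterial, since $\binom{n}{p}=0$ for $p>n$) yields $k=\kappa$ directly, whereas the paper splits into $p\le r_J$ and $p>r_J$ and lands on $k=\min\{\kappa,a\}$ -- which in any case equals $\kappa$ under the standing assumption $\xi<1/\log 2$, since $e^{-\kappa}=2e^{-1/\xi}>e^{-1/\xi}/(1-e^{-1/\xi})=e^{-a}$ exactly when $e^{-1/\xi}<1/2$.
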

\begin{proof}
A straightforward application of the triangle inequality yields
\begin{equation}\label{eq:triangle_appendix}
	\norm{H-\tilde{H}} \leq \sum_{I}\abs{(J_{I}-\tilde{J}_{I})} + \abs{\tilde{J}_{I}}\norm{(\tau_{I}^{z} - \tilde{\tau}_{I}^{z})}.
\end{equation}

Recall that the truncated coefficients $\tilde{J}_{I}$ are 0 beyond range $r_{J}$).
In the sum below, the symbol $p$ represents how many sites are coupled by $J$.
That is, the relevant term is $J_{i_1,\ldots i_p}$, a $p$-body term.
The symbol $\ell$ denotes the maximum distance between any two sites coupled by a term of this form, given by $\ell=\abs{i_1-i_p}$.
The first term of \cref{eq:triangle_appendix} may be bounded as follows:
\begin{align}
\sum_{I}\abs{J_{I}-\tilde{J}_{I}} & \leq \sum_{p=2}^{r_{J}}N\sum_{\ell=r_{J}}^{\infty}\binom{\ell-1}{p-2}e^{-\frac{\ell}{\xi}} + \sum_{p=r_{J}+1}^{\infty}N\sum_{\ell=p-1}^{\infty}\binom{\ell-1}{p-2}e^{-\frac{\ell}{\xi}} \\
& \leq N\sum_{p=0}^{r_{J}-2}S_{p,r_{J}-1}{e^{-1/\xi}} + N \sum_{p = r_{J}-1}^{\infty}S_{p,p}{e^{-1/\xi}} \\
& \leq \frac{Ne^{-1/\xi}}{1-e^{-\kappa}} e^{-\kappa (r_{J}-1)} + CNe^{-1/\xi}\sum_{p=r_{J}-1}^{\infty}pe^{-ap} 
\\ & \leq \frac{Ne^{-1/\xi}}{1-e^{-\kappa}} e^{-\kappa (r_{J}-1)} + CNe^{-1/\xi} \left[\frac{(r_J-1)e^{-a(r_J-1)}}{1-e^{-a}} + \frac{e^{-a-a(r_J-1)}}{(1-e^{-a})^2} \right]
\\ & = \frac{Ne^{-1/\xi}}{1-e^{-\kappa}} e^{-\kappa (r_{J}-1)} + CNe^{-1/\xi} \frac{e^{-a r_J}}{(1-e^{-a})^2} \times (e^a(r_J-1)-r_J+2)
\\ &\leq c_1 N e^{-\kappa r_{J}} + c_2 N r_J e^{-ar_J}
\\ &\leq C_{J}N r_J e^{-kr_{J}},
\end{align}
where 
\begin{align}
	c_{1} &= \frac{e^{\kappa}e^{-1/\xi}}{1-e^{-\kappa}} = \frac{1}{2(1-e^{-\kappa})},\\
	c_{2} &= Ce^{-1/\xi}\frac{e^{a}+1}{(1-e^{-a})^{2}} = C \frac{1}{(1-e^{-a})^{2}},\\
	C &= 10.8.
\end{align}
$C_{J}$ is a constant that is independent of $N$ but will depend on $\xi$ (directly and through $a$ and $\kappa$), and
\begin{align}
	\kappa &\equiv \frac{1}{\xi} - \log 2, \\
	a &\equiv \log(e^{1/\xi}-1), \\
	k &\equiv \min\left\lbrace \kappa, a\right\rbrace.
\end{align}
The requirements on both $a$ and $\kappa$ are the same, $\xi < \frac{1}{\log 2}$. 

To bound the second term, we first use a telescoping sum, the triangle inequality, and unitary invariance of the operator norm to show that
\begin{equation}
  	\norm{(\tau_{I}^{z} - \tilde{\tau}_{I}^{z})} \leq \sum_{j = 1}^{p}\norm{(\tau_{i_{j}}^{z} - \tilde{\tau}_{i_{j}}^{z})} \leq 8\sqrt{q}Np e^{-\frac{r_{U}}{2\xi}},
\end{equation}  
where $I$ is the multi-index $i_{1}\dots i_{p}$. 
Plugging this back in yields
\begin{align}
	\sum_{I}\abs{\tilde{J_{I}}}\norm{\tau_{I}^{z}-\tilde{\tau}_{I}^{z}} &\leq \sum_{\ell=1}^{r_{J}-1}N\sum_{p=2}^{\ell+1}\binom{\ell-1}{p-2}8p\sqrt{q}Ne^{-\frac{r_{U}}{2\xi}}e^{-\frac{\ell}{\xi}} \\
	&=8\sqrt{q}e^{-\frac{1}{\xi}}N^{2}e^{-\frac{r_{U}}{2\xi}}\sum_{\ell=0}^{r_{J}-2}\sum_{p=0}^{\ell}\binom{\ell}{p}(p+2)e^{-\frac{\ell}{\xi}} \\
	&\leq 8\sqrt{q}e^{-\frac{1}{\xi}}N^{2}e^{-\frac{r_{U}}{2\xi}}\sum_{p=0}^{r_{J}-2}(p+2)\sum_{\ell=p}^{r_{J}-2}\binom{\ell}{p}e^{-\frac{\ell}{\xi}} \\
	&\leq 8\sqrt{q}e^{-\frac{1}{\xi}}N^{2}e^{-\frac{r_{U}}{2\xi}}\sum_{p=0}^{r_{J}-2}(p+2)S_{p,p} \\
	&\leq 8\sqrt{q}e^{-\frac{1}{\xi}}N^{2}e^{-\frac{r_{U}}{2\xi}}\sum_{p=0}^{r_{J}-2}C(p+2)pe^{-ap} \\
	&\leq C_{U}N^{2}e^{-\frac{r_{U}}{2\xi}}
\end{align}
for some constant $C_{U}$.
In the second-to-last line, we have bounded $S_{p,p}$ using \cref{lem:sum_bound_full}.

Thus, altogether, we have that:
\begin{equation}
	\norm{\Delta H} \leq C_{U}N^{2}e^{-\frac{r_{U}}{2\xi}} + C_{J}Nr_{J}e^{-kr_{J}}.
\end{equation}
\end{proof}
\end{document}